\documentclass[11pt]{amsart}
\baselineskip=7.0mm
\usepackage{amsmath}
\usepackage{graphicx}
\setlength{\baselineskip}{1.09\baselineskip}

\usepackage{bm}
\usepackage{enumerate}
\theoremstyle{plain}
\newtheorem{theorem}{Theorem}[section]
\newtheorem{lemma}[theorem]{Lemma}
\newtheorem{prop}[theorem]{Proposition}

\theoremstyle{definition}

\newtheorem{definition}[theorem]{Definition}

\numberwithin{equation}{section}

\setlength{\textwidth}{6.6in} \setlength{\textheight}{8.6in}
\hoffset=-0.83truein
\voffset=-0.1truein



\newcommand{\tr}{\operatorname{Tr}}

\newcommand{\abs}[1]{\left|#1\right|}
\begin{document}




\title[Proof of the Mass-Angular Momentum Inequality]{Proof of the Mass-Angular Momentum Inequality for Bi-Axisymmetric Black Holes With Spherical Topology}

\author{Aghil Alaee}
\address{Department of Mathematical and Statistical Science\\
University of Alberta\\
Edmonton AB T6G 2G1, Canada\\
\& Department of Mathematics and Statistics\\
Memorial University of Newfoundland\\
St John's NL A1C 4P5, Canada}
\email{khangha@ualberta.ca, aak818@mun.ca}


\author{Marcus Khuri}
\address{Department of Mathematics, Stony Brook University, Stony Brook, NY 11794, USA}
\email{khuri@math.sunysb.edu}

\author{Hari Kunduri}
\address{Department of Mathematics and Statistics\\
Memorial University of Newfoundland\\
St John's NL A1C 4P5, Canada}
\email{hkkunduri@mun.ca}


\thanks{A. Alaee acknowledges the support of a PIMS Postdoctoral Fellowship and NSERC Grant 261429-2013. M. Khuri acknowledges the support of NSF Grant DMS-1308753. H. Kunduri acknowledges the support of NSERC Grant 418537-2012.}

\begin{abstract}
We show that extreme Myers-Perry initial data realize the unique absolute
minimum of the total mass in a physically relevant (Brill) class of
maximal, asymptotically flat, bi-axisymmetric initial data for the Einstein
equations with fixed angular momenta. As a consequence, we prove the mass-angular momentum inequality in this setting for 5-dimensional spacetimes. That is, all data in this class satisfy the inequality $m^3\geq \frac{27\pi}{32}\left(|\mathcal{J}_1|+|\mathcal{J}_2|\right)^2$, where $m$ and $\mathcal{J}_i$, $i=1,2$ are the total mass and angular momenta of the spacetime. Moreover, equality holds if and only if the initial data set is isometric to the canonical slice of an extreme Myers-Perry black hole.
\end{abstract}
\maketitle

\section{Introduction}

Based on the standard picture of gravitational collapse for $3+1$ dimensional asymptotically flat spacetimes \cite{choquet2009general}, heuristic physical arguments \cite{dain2012geometric} lead to an inequality relating the total (ADM) mass and angular momentum
\begin{equation} \label{mj4d}
m \geq \sqrt{|\mathcal{J}|},
\end{equation}
if angular momentum is conserved during the evolution. In order to achieve such a property for the angular momentum, axisymmetry is typically imposed along with other conditions on the matter fields. It turns out that it is most natural to treat this inequality at the level of initial data $(M^3,g,k)$, where $g$ is a Riemannian metric on the 3-manifold $M^3$, and $k$ represents the extrinsic curvature of the embedding into spacetime. In this regard, Dain \cite{Dain2008} was the first to rigorously establish \eqref{mj4d} for a general class of vacuum, maximal initial data sets. In this result it was assumed that $M^3\cong\mathbb{R}^{3}\setminus\{0\}$ admits a global Brill (cylindrical) coordinate system $(\rho,z,\phi)$ in which the metric takes the form
\begin{equation}\label{1}
g=e^{2U+2\alpha}(d\rho^{2}+dz^{2})+\rho^{2}e^{2U}(d\phi+A_{\rho}d\rho+A_{z}dz)^{2},
\end{equation}
for some coefficients $U$, $\alpha$, $A_{\rho}$, and $A_{z}$ satisfying appropriate asymptotics.
This particularly simple form of the metric played an important role in the proof. Namely with this, the scalar curvature may be integrated by parts to arrive at a lower bound for the mass, in terms of a (reduced) harmonic energy functional. The second step of the argument then entails showing that the energy functional is minimized by an extreme Kerr harmonic map with the same angular momentum. Later, Chrusciel \cite{chrusciel2008masspositivity} showed that the class of initial data that Dain used was quite general. More precisely, he showed that any simply connected, axisymmetric initial data set with certain asymptotics, admits global Brill coordinates. Further progress was also made with regards to the harmonic map part of the problem. In \cite{schoen2013convexity}, Schoen and Zhou used the convexity properties of harmonic map energies along geodesic deformations in order to simplify the proof, achieve weaker hypotheses on the asymptotics, and to obtain a gap lower bound between the energy of the given data and that of the minimizer. A charged version of \eqref{mj4d} has also been established in \cite{Chrusciel2009,Costa2010,schoen2013convexity}. Corresponding rigidity statements have also been given \cite{Dain2008,khuri2015positive,schoen2013convexity} when these inequalities are saturated, that is the initial data must be the canonical slice of an extreme Kerr or extreme Kerr-Newman black hole. Furthermore, an extension of these inequalities to the case of multiple black holes has been given in \cite{chrusciel2008mass,khuri2015positive}, although here the lower bound for the mass is not known as an explicit quantity.

All of the results mentioned so far involve the maximal assumption, which yields important positivity properties for the scalar curvature. In the nonmaximal case, Zhou \cite{Zhou} has treated \eqref{mj4d} for vacuum initial data with small $\tr_{g}k$, and Cha and the second author have reduced the general case, for both the original inequality \cite{cha2014deformations} and its charged version \cite{cha2015deformations} to solving a canonical system of elliptic equations.

As alluded to above, a closely related topic to the proof of the mass-angular momentum and mass-angular momentum-charge inequality is the uniqueness and existence of stationary, axisymmetric black hole solutions to the vacuum and electrovacuum Einstein equations.  This problem is equivalent to showing uniqueness and existence of certain singular harmonic maps from $\mathbb{R}^3$ into 2-dimensional hyperbolic space $\mathbb{H}^{2}$ and complex hyperbolic space $\mathbb{H}^{2}_{\mathbb{C}}$, respectively.
In particular, in the single black hole case, it is known in this setting \cite{chrusciel2012stationary} that the Kerr(-Newman) family of black hole solutions exhausts all possibilities.  The extreme members of this family then provide the minimizers for the mass lower bound.

The purpose of the present article is to establish a mass-angular momentum inequality in five dimensions. The investigation of higher dimensional black hole solutions has attracted a great deal of interest in recent years \cite{emparan2008black,hollands2012black}, chiefly motivated by string theory and the gauge theory-gravity correspondence.  A central result in this area is the proof by Galloway and Schoen \cite{galloway2006rigidity,Galloway2006} that cross-sections $\mathcal{H}$ of the event horizon, and more generally marginally outer trapped surfaces, must be of positive Yamabe type if the dominant energy condition is satisfied. This implies that $\mathcal{H}$ is diffeomorphic to the sphere $S^3$ or its quotients,  $S^1 \times S^2$, or connected sums thereof.  A second key result due to Hollands, Ishibashi, and Wald \cite{Hollands2007}, and independently by Isenberg and Moncrief \cite{moncrief2008symmetries}, is a rigidity theorem which states that in the analytic setting, a stationary, rotating black hole must admit an additional $U(1)$ isometry.  Explicit vacuum solutions corresponding to $\mathcal{H}\cong S^3$ and $\mathcal{H} \cong S^1 \times S^2$ are known, these are respectively the Myers-Perry family of solutions \cite{Myers1986} and the `black ring' solution of Emparan and Reall \cite{emparan2002rotating} (see also \cite{Pomeransky2006}). More recently, the first example of a black hole with real projective space topology $\mathcal{H} \cong \mathbb{RP}^{3}$ has been found as a solution to supergravity \cite{kunduri2014supersymmetric}.  All these solutions admit $U(1)^2$ isometries.

In order to establish geometric inequalities involving angular momentum for black holes in 5-dimensions, it is natural to consider initial data admitting a $U(1)^2$ action by isometries. As each such inequality is expected to be associated with a model spacetime, which saturates the inequality, it is useful here to recall the basic uniqueness theorem for 5-dimensional stationary vacuum black hole solutions, in order to determine which solutions may serve as models. The first important fact to note is that in five dimensions, black holes are not determined by their mass and charges alone. This is exhibited dramatically by the existence of the black ring solution which can possess the same mass and angular momenta as a Myers-Perry black hole, but with a horizon of different topology. Furthermore, Hollands and Yzadjiev \cite{Hollands2008} have shown that after fixing mass and angular momenta, nondegenerate stationary vacuum black holes with $U(1)^2$ isometries are uniquely determined by a set of invariants which characterize the fixed points of the $U(1)^2$ action and the surfaces on which the timelike Killing field is null; this is referred to as the `orbit space' data. This data encodes, in particular, the topology of the horizon and the second homology group of the domain of outer communication.  An analogous result holds for extreme (degenerate) black holes \cite{Figueras2010}. Interestingly, these results do not address the question of existence of black hole solutions for a given orbit space. However, they indicate that the unique solution (if it exists) associated with each orbit space has the potential to serve as a model black hole for a geometric inequality.

In general, the orbit space is a 2-dimensional manifold with a boundary consisting of 1-dimensional segments and corners. On such segments and corners, respectively one and two linear combinations of the Killing fields generating the $U(1)^2$ isometries have fixed points \cite{alaee2014thesis,alaee2014mass,Hollands2008}. In the present work we will restrict attention to initial data which have the same orbit space structure as that of the Myers-Perry black holes. Here the orbit space may be identified with a half plane minus the origin, in which the boundary consists of two infinitely long rays, each of which serves as the fixed point set for one of the two rotational Killing fields.

Consider an initial data set $(M^4, g, k)$ for the 5-dimensional Einstein equations. Again this consists of a 4-manifold $M^4$, Riemannian metric $g$, and symmetric 2-tensor $k$ representing
extrinsic curvature. The energy and momentum density of the matter fields are given by
\begin{equation}
16\pi\mu = R+(\tr_{g}k)^{2}-|k|_{g}^{2},\text{ }\text{ }\text{ }\text{ }\text{ }\text{ }\text{ }\text{ }
8\pi J = \operatorname{div}_{g}(k-(\tr_{g}k)g),
\end{equation}
where $R$ is the scalar curvature of $g$.
It will be assumed throughout that the data are bi-axially symmetric. This means that the group of isometries of the
Riemannian manifold $(M^4,g)$ has a subgroup
isomorphic to $U(1)^2$ with no discrete isotropy subgroups, and that all quantities defining the initial data are invariant under the $U(1)^2$ action. Thus if
$\eta_{(l)}$, $l=1,2$ are the two Killing field generators associated with this symmetry, then
\begin{equation}
\mathfrak{L}_{\eta_{(l)}}g=\mathfrak{L}_{\eta_{(l)}}k
=\mathfrak{L}_{\eta_{(l)}}\mu=\mathfrak{L}_{\eta_{(l)}}J=0,
\end{equation}
where $\mathfrak{L}_{\eta_{(l)}}$ denotes Lie differentiation. We will also postulate that
$M^4$ has two ends, with one designated end being asymptotically flat, and the other being either asymptotically
flat or asymptotically cylindrical. Recall that a domain $M^{4}_{\text{end}}\subset M^4$ is an
asymptotically flat end if it is diffeomorphic to $\mathbb{R}^{4}\setminus\text{Ball}$, and in the coordinates given by the asymptotic diffeomorphism the following fall-off conditions hold
\begin{equation}\label{3}
g_{ab}=\delta_{ab}+O_{1}(r^{-1-\kappa}),\text{ }\text{ }\text{ }\text{ }\text{ }
k_{ab}=O(r^{-2-\kappa}),\text{
}\text{ }\text{ }\text{ }
\text{ }\mu\in L^{1}(M^{4}_{\text{end}}),\text{ }\text{ }\text{ }\text{ }\text{ }
J_{i}\in L^{1}(M^{4}_{\text{end}}),
\end{equation}
for some $\kappa>0$.
These asymptotics guarantee that the ADM energy and linear momentum are well-defined, with the energy given by the following limit
\begin{equation}\label{6}
m=\frac{1}{16\pi}\int_{S_{\infty}}(g_{ab,a}-g_{aa,b})\nu^{b},
\end{equation}
where $S_{\infty}$ indicates the limit as $r\rightarrow\infty$ of integrals over coordinate spheres $S_{r}$, with unit outer normal $\nu$. Although the asymptotics \eqref{3} are not strong enough to ensure that the linear momentum vanishes, and so the mass does not coincide with the energy, we will throughout this paper refer to the quantity \eqref{6} as the mass in order to reserve the use of the term `energy' in reference to harmonic maps. We note that the weaker hypothesis $g_{ab},p_{ab}\in L^{2}(M^{4}_{\text{end}})$ may be used in place of the explicit asymptotics involving $\kappa$, where $p=k-(\tr_{g} k)g$ is the momentum tensor, in order to achieve well-defined ADM energy-momentum. Moreover, it is likely that the results of this paper hold under these weaker conditions, but we will not pursue such questions here. Now consider the ADM angular momenta
\begin{equation}\label{7}
\mathcal{J}_{l}=\frac{1}{8\pi}\int_{S_{\infty}}(k_{ab}-(\tr_{g} k)g_{ab})\nu^{a}\eta_{(l)}^{b},
\text{ }\text{ }\text{ }\text{ }\text{ }l=1,2.
\end{equation}
A priori this may not yield a finite well-defined quantity solely under the asymptotics \eqref{3}, since the Killing fields grow like $r^{2}$. However, under the additional assumption that $J(\eta_{(l)})\in L^{1}(M^{4}_{\text{end}})$, $l=1,2$ we have that \eqref{7} is finite. This may easily be seen by integrating the following expression over $M^{4}_{\text{end}}$,
\begin{equation}
\operatorname{div}_{g} p(\eta_{(l)})=(\operatorname{div}_{g} p)(\eta_{(l)})
+\frac{1}{2}\langle p,\mathfrak{L}_{\eta_{(l)}}g\rangle=8\pi J(\eta_{(l)}).
\end{equation}
Our main result is as follows.

\begin{theorem}\label{TheoremI}
Let $(M^4,g,k)$ be a smooth, complete, bi-axially symmetric, maximal initial data set for the 5-dimensional Einstein equations satisfying $\mu\geq 0$
and $J(\eta_{(l)})=0$, $l=1,2$ and with two ends, one designated asymptotically flat and the other either asymptotically flat or
asymptotically cylindrical.
If $M^{4}$ is diffeomorphic to $\mathbb{R}^{4}\setminus\{0\}$ and admits a global system of generalized Brill coordinates then
\begin{equation}\label{23}
m^3\geq \frac{27\pi}{32}\left(|\mathcal{J}_1|+|\mathcal{J}_2|\right)^2.
\end{equation}
Moreover if $\mathcal{J}_{i}\neq 0$, $i=1,2$, then equality holds if and only if $(M^4,g,k)$ is isometric to the canonical slice of an extreme Myers-Perry spacetime.
\end{theorem}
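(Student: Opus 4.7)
The plan is to adapt the two-step program of Dain, refined by Schoen and Zhou, to the 5-dimensional bi-axisymmetric setting. First, derive a lower bound for the ADM mass in terms of a renormalized Dirichlet-type harmonic energy on the orbit space; second, show that this energy is uniquely minimized, among all maps with the prescribed asymptotic and rod data encoding $(\mathcal J_1,\mathcal J_2)$, by the harmonic map $\Psi_{0}$ associated with the extreme Myers-Perry spacetime. A direct evaluation of the energy on $\Psi_{0}$ then gives the cube of the right-hand side of \eqref{23}.

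For the first step I would exploit the global generalized Brill coordinate system to rewrite the mass \eqref{6} as a limit of boundary integrals, then apply the divergence theorem on $M^{4}$ to convert it into a bulk integral featuring the scalar curvature $R$. Maximality forces $R = 16\pi\mu + |k|_g^2$, hence $R \geq |k|_g^2$ via $\mu\geq 0$. Bi-axisymmetry combined with $J(\eta_{(l)})=0$ produces two globally defined twist potentials $v_1,v_2$ on $M^{4}$, whose gradients bound $|k|_g^2$ from below by a manifestly nonnegative quadratic form in the Brill frame. Substituting the Brill expression for $g$ into $R$, integrating by parts on the half-plane orbit space (with careful control of boundary contributions at infinity, along the two axes, and at the corner where the $U(1)^2$ isotropy jumps), and discarding nonnegative squares produces the desired bound $m \geq \mathcal M(\Psi)$, where $\Psi$ maps the orbit space into the symmetric space target $SL(3,\mathbb R)/SO(3)$ appropriate for the $U(1)^2$ reduction. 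The chief subtleties here are the isolation and removal of non-sign-definite boundary terms, and the renormalization that becomes necessary when the second end is asymptotically cylindrical.

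The main obstacle is the harmonic-map minimization in the second step. Because the target is nonpositively curved, the Schoen-Zhou convexity technique applies in principle: along the geodesic in $SL(3,\mathbb R)/SO(3)$ connecting a given $\Psi$ to the reference map $\Psi_{0}$, the energy density is convex. The complication is that $\Psi_{0}$ is singular along the horizon rod and at the corner where the two rotational axes meet, so the renormalized energy difference $\mathcal M(\Psi) - \mathcal M(\Psi_{0})$ must be extracted by a cutoff-and-limit procedure in suitably weighted spaces, with the fixed angular momenta and the rod/axis data serving to control every boundary contribution generated by the geodesic deformation. Once $\mathcal M(\Psi) \geq \mathcal M(\Psi_{0})$ with equality only for $\Psi=\Psi_{0}$ is established, the inequality \eqref{23} follows. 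For rigidity, saturation forces $\mu\equiv 0$, the pointwise vanishing of every nonnegative square discarded in step one, and $\Psi=\Psi_{0}$; reconstructing $(g,k)$ from this harmonic-map data together with the lapse and shift dictated by the Myers-Perry Ernst-type equations then identifies the initial data isometrically with the canonical $t=0$ slice of extreme Myers-Perry.
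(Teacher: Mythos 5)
Your proposal tracks the paper's own proof almost step for step: Brill coordinates to turn the ADM mass into a bulk integral, maximality plus $\mu\ge 0$ to feed in $R=16\pi\mu+|k|_g^2$, twist potentials from $J(\eta_{(l)})=0$, rewriting the result as a reduced harmonic energy valued in $SL(3,\mathbb R)/SO(3)$, Schoen--Zhou geodesic convexity with a cut-and-paste regularization to show the extreme Myers--Perry map is the unique minimizer, and then direct evaluation for the constant and a rigidity argument reconstructing $(g,k)$. One small correction: in this Brill picture the orbit space is the half-plane minus the origin with only the two semi-infinite axis rods, so there is no ``horizon rod''; the singular set of $\Psi_0$ that the cutoff procedure must handle consists of the origin (the cylindrical/second asymptotically flat end) and the two axes, not a finite interior rod.
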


This theorem may be considered as a direct generalization of Dain's result \cite{Dain2008} to higher dimensions, as both assume the existence of a global Brill coordinate system. It also generalizes the local versions of inequality \eqref{23} established in \cite{alaee2015proof, Alaeeremarks2015}, for data which are sufficiently close to extreme Myers-Perry. Moreover, this result may be interpreted as giving a variational characterization of the extreme Myers-Perry initial data, as the mass minimizers among all data with fixed angular momentum. Note that the horizon geometries of 5-dimensional extreme vacuum black holes also arise as minimizers in the context of the area-angular momenta inequalities proved in \cite{Hollands2012}; such minimizers have been completely classified \cite{Kunduri2009}.

The assumption of nonvanishing angular momenta is included since otherwise there is no extreme Myers-Perry black hole to serve as a model; the extreme Myers-Perry solutions with one or more vanishing angular momenta do not contain a black hole. In particular, the inequality when $\mathcal{J}_{i}=0$, $i=1,2$ reduces to the positive mass theorem, and due to the topology of the initial data the case of equality cannot be achieved. Let us now make a few remarks concerning the other hypotheses. From the preceding discussion the motivation for most of the hypotheses should be clear, except perhaps those associated with the momentum density and Brill coordinates. The assumption $J(\eta_{(l)})=0$, $l=1,2$ is, as mentioned above, used to obtain well-defined total angular momenta, but will also be used for the important purpose of guaranteeing the existence of twist potentials, which encode the relevant information concerning angular momentum and help reduce the proof to a harmonic map problem. The existence of a generalized Brill coordinate system ensures that there is a global system of (cylindrical) coordinates such that the metric takes a simple form analogous to \eqref{1} in the $3+1$ dimensional case; a precise description will be given in the next section. Although this appears to be a restrictive assumption, let us recall that in the $3+1$ setting, Chrusciel \cite{chrusciel2008masspositivity} (see also \cite{KhuriSokolowsky}) has shown under general conditions that simply connected axisymmetric initial data admit global Brill coordinates. Similarly, we conjecture that under  appropriate asymptotics, a simply connected bi-axisymmetric initial data set with trivial second homology group admits a desired set of generalized Brill coordinates.

A natural question to ask is whether the current theorem admits generalizations to dimensions higher than 5. It turns out that this is not possible if we require the data to be asymptotically flat with a 2-dimensional orbit space. To see this, suppose that a spacetime has dimension $n$ with a $U(1)^{n-3}$ symmetry. Asymptotic flatness implies that the initial data will have $SO(n-1)$ as the compact part of the asymptotic symmetry group, however this special orthogonal group admits at most $(n-1)/2$ mutually commuting generators. Thus, the only dimensions for which $U(1)^{n-3}\subset SO(n-1)$ are $n=4,5$.

This paper is organized as follows. In Section 2 we give a detailed description of generalized Brill coordinates. In Section 3 we derive a lower bound for the mass in terms of a functional that will be related to the harmonic energy of a map from $\mathbb{R}^{3}\rightarrow SL(3,\mathbb{R})/SO(3)$. Section 4 is then dedicated to proving that the extreme Myers-Perry harmonic map achieves the absolute minimum of this functional, and at the end of this section we then prove Theorem \ref{TheoremI}. A discussion of future directions and generalizations of the results presented here is given in Section 5. Finally an appendix is included to record, among other things, important properties of the Myers-Perry black holes.

\section{Generalized Brill Coordinates}

In this section we seek a certain type of cylindrical coordinate system for the initial data, which are isothermal for the metric induced on the orbit space. These generalized Brill coordinates are related to the well-known Weyl coordinates familiar from the Ernst reduction of the stationary vacuum Einstein equations. The primary difference between Brill and Weyl coordinates is that the former applies to the region inside and outside of a black hole, while the latter only covers the outer region. However, in the case of extreme black holes the two types of coordinates coincide. The following definition was initially given in \cite{alaee2014thesis,Alaeeremarks2015} in a more general context, whereas here it is refined for the particular problem at hand.

\begin{definition}\label{GBdata}
An initial data set $(M^{4},g,k)$ with a $U(1)^{2}$ symmetry, and $M^{4}\cong \mathbb{R}^{4}\setminus \{0\}$, is said to admit a system of generalized Brill coordinates $(\rho,z,\phi^{1},\phi^{2})$ if globally the metric takes the form
\begin{equation}\label{GBmetric}
g=\frac{e^{2U+2\alpha}}{2\sqrt{\rho^2+z^2}}\left(d\rho^2+d z^2\right)+e^{2U}\lambda_{ij}\left(d\phi^i+A^i_l d y^l\right)\left(d\phi^j+A^j_l d y^l\right),
\end{equation}
for some functions $U$, $\alpha$, $A_{l}^{i}$, and a symmetric positive definite matrix $\lambda=(\lambda_{ij})$ with $\det\lambda=\rho^{2}$, $i,j,l=1,2$, $(y^1,y^2)=(\rho,z)$, all independent of $(\phi^{1},\phi^{2})$ and satisfying the asymptotics \eqref{F1}-\eqref{F12}. Moreover, the coordinates should take values in the following ranges $\rho\in[0,\infty)$, $z\in \mathbb{R}$, and $\phi^i\in [0,2\pi]$, $i=1,2$.
\end{definition}

The similarity of the metric structure \eqref{GBmetric} above with that of traditional Brill coordinates in 3-dimensions \eqref{1} is evident, except perhaps for the presence of $\sqrt{\rho^{2}+z^{2}}$. This term is included so that like in the 3-dimensional case, \eqref{GBmetric} reduces to the flat metric when $U=\alpha=A_{l}^{i}=0$, if $\lambda=\sigma:=r^{2}\operatorname{diag}(\sin^{2}\theta,\cos^{2}\theta)$, where the appropriate polar coordinates are given by
\begin{equation}\label{polarcoord}
\rho=\frac{1}{2}r^{2}\sin(2\theta),\text{ }\text{ }\text{ }\text{ }\text{ }\text{ }\text{ }
z=\frac{1}{2}r^{2}\cos(2\theta),\text{ }\text{ }\text{ }\text{ }\text{ }\text{ }\text{ }
r^{2}=2\sqrt{\rho^{2}+z^{2}},
\end{equation}
with $r\in [0,\infty)$, $\theta\in[0,\pi/2]$. Note that the coordinates $(\theta,\phi^{1},\phi^{2})$ are Hopf coordinates for the 3-sphere, which are naturally associated with the Hopf fibration. In particular, the flat metric is given in these two coordinate systems by
\begin{equation}\label{flatmet}
\delta_4= \frac{d\rho^2 + d z^2}{2\sqrt{\rho^2 + z^2} } + \sigma_{ij}d\phi^{i} d\phi^{j}=
d r^2 + r^2d\theta^2 +r^{2}\left(\sin^{2}\theta (d\phi^1)^{2}+\cos^{2}\theta (d\phi^2)^{2}\right).
\end{equation}
We also note that without loss of generality the generators of the $U(1)^{2}$ symmetry may be chosen such that $\eta_{(l)} = \partial_{\phi^l}$, $l=1,2$.

Let us now record the appropriate asymptotics in three different regions, namely at infinity, the origin, and near the axis. The particular decay rates are motivated in general by the indicated asymptotically flat and asymptotically cylindrical geometries, and by the desire for certain coefficients, including $\lambda_{ij}$ and $A^{i}_{l}$, to not yield a direct contribution to the ADM mass. In what follows $a,b$ are functions of $\theta$, $\kappa>0$ is as in the previous section, and $\tilde{\sigma}=\tilde{\sigma}_{ij}d\phi^{i}d\phi^{j}$ is a Riemannian metric on the torus $T^2$ depending only on $\theta$. We begin with the designated asymptotically flat end characterized by $r\rightarrow\infty$, and require
\begin{equation}\label{F1}
U=O_{1}(r^{-1-\kappa}),\text{ }\text{ }\text{ }\text{ }\text{ }
\alpha=O_1(r^{-1-\kappa}),\text{ }\text{ }\text{ }\text{ }\text{ }
A_{\rho}^i=\rho O_{1}(r^{-5-\kappa}),
\text{ }\text{ }\text{ }\text{ }\text{ }
A_z^i=O_{1}(r^{-3-\kappa}),
\end{equation}
\begin{equation}\label{F2}
\lambda_{ii}=\left(1+(-1)^{i}ar^{-1-\kappa}+O_{1}(r^{-2-\kappa})\right)\sigma_{ii},\text{ }\text{ }\text{ }\text{ }\text{ }\text{ }\text{ }\lambda_{12}=\rho^2 O_1(r^{-5-\kappa}),
\text{ }\text{ }\text{ }\text{ }\text{ }\text{ }\text{ }|k|_{g}=O(r^{-2-\kappa}).
\end{equation}
Next consider the asymptotics as $r\rightarrow 0$, where there are two types to account for.  Namely, in the asymptotically flat case
\begin{equation}\label{F4}
U=-2\log r+O_{1}(1),\text{ }\text{ }\text{ }\text{ }\text{ }
\alpha=O_1(r^{1+\kappa}),\text{ }\text{ }\text{ }\text{ }\text{ }
A_{\rho}^i=\rho O_{1}(r^{1+\kappa}),
\text{ }\text{ }\text{ }\text{ }\text{ }
A_z^i=O_{1}(r^{3+\kappa}),
\end{equation}
\begin{equation}\label{F5}
\lambda_{ii}=\left(1+(-1)^{i}br^{1+\kappa}+O_{1}(r^{2+\kappa})\right)\sigma_{ii},\text{ }\text{ }\text{ }\text{ }\text{ }\text{ }\text{ }\lambda_{12}=\rho^2 O_1(r^{-\frac{1}{2}+\kappa}),\text{ }\text{ }\text{ }\text{ }\text{ }\text{ }\text{ }
|k|_{g}=O(r^{2+\kappa}),
\end{equation}
and in the asymptotically cylindrical case
\begin{equation}\label{F7}
U=-\log r+O_{1}(1),\text{ }\text{ }\text{ }\text{ }\text{ }\text{ }
\alpha=O_{1}(1),\text{ }\text{ }\text{ }\text{ }\text{ }\text{ }A_{\rho}^i=\rho O_{1}(r^{1+\kappa}),
\text{ }\text{ }\text{ }\text{ }\text{ }\text{ }
A_z^i=O_{1}(r^{3+\kappa}),
\end{equation}
\begin{equation}\label{F8}
\lambda_{ij}=r^{2}\tilde{\sigma}_{ij}+O_{1}(r^{2+\kappa}),\text{ }\text{ }\text{ }\text{ }\text{ }\text{ }\text{ }\text{ }|k|_{g}=O(r^{2+\kappa}).
\end{equation}
Lastly, let $\Gamma=\Gamma_{+}\cup\Gamma_{-}$ denote the two axes $\Gamma_{\pm}=\{\rho=0, \pm z>0\}$, then the asymptotics as $\rho\rightarrow 0$ are given by
\begin{equation}\label{F10}
U=O_{1}(1),\text{ }\text{ }\text{ }\text{ }\text{ }\text{ }
\alpha=O_1(1),\text{ }\text{ }\text{ }\text{ }\text{ }\text{ }
A_{\rho}^i= O_{1}(\rho),
\text{ }\text{ }\text{ }\text{ }\text{ }\text{ }
A_z^i=O_{1}(1),\text{ }\text{ }\text{ }\text{ }\text{ }\text{ }
|k|_{g}=O(1),
\end{equation}
\begin{equation}\label{F12}
\lambda_{11},\lambda_{12}=O(\rho^{2}),\text{ }\text{ }\text{ }\text{ }\text{ }
\lambda_{22}=O(1)
\text{ }\text{ }\text{ on }\text{ }\text{ }\Gamma_{+},\text{ }\text{ }\text{ }\text{ }\text{ }
\lambda_{22},\lambda_{12}=O(\rho^{2}),\text{ }\text{ }\text{ }\text{ }\text{ }
\lambda_{11}=O(1)
\text{ }\text{ }\text{ on }\text{ }\text{ }\Gamma_{-}.
\end{equation}

It should be pointed out that regularity of the geometry along the axis implies a compatibility condition between $\alpha$ and $\lambda$. To see this, let $\vartheta\in(-\infty,2\pi)$ be the
cone angle deficiency coming from the metric $g$ at the axes of rotation, that is
\begin{equation}
\frac{2\pi}{2\pi-\vartheta}=\lim_{\rho\rightarrow 0}\frac{2\pi\cdot\mathrm{Radius}}{\mathrm{Circumference}}
=\lim_{\rho\rightarrow 0}\frac{{\displaystyle \int_{0}^{\rho}}\sqrt{\frac{e^{2U+2\alpha}}{2\sqrt{\rho^{2}+z^{2}}}
+e^{2U}\lambda_{ij}A_{\rho}^{i}A_{\rho}^{j}}d\rho}
{\sqrt{e^{2U}\lambda_{ii}}}=\frac{e^{\alpha(0,z)}}{\sqrt{2|z|}}\lim_{\rho\rightarrow 0}
\frac{\rho}{\sqrt{\lambda_{ii}}}
\end{equation}
where $i=1,2$ corresponds to $\Gamma_{+},\Gamma_{-}$, respectively.
The cone angle deficiency should vanish $\vartheta=0$, since $(M^4,g)$ is smooth across the axis, and thus
\begin{equation}\label{cone}
\alpha(0,z)=\frac{1}{2}\log\left(|z|\partial_{\rho}^{2}\lambda_{ii}(0,z)\right)=:\alpha_{\pm}(z)\text{ }\text{ }\text{ }\text{ on }\text{ }\text{ }\text{ }\Gamma_{\pm}.
\end{equation}

To close this section, we confirm here that the asymptotically flat asymptotics \eqref{F1}, \eqref{F2} and \eqref{F4}, \eqref{F5} used to define Brill coordinates are consistent with those given in \eqref{3}. First observe that the fall-off imposed on $k$ in \eqref{F2} trivially implies that in \eqref{3}. Consider now the cartesian coordinates
\begin{equation}
x^1=r\cos\theta\cos\phi^1,\quad x^2=r\cos\theta\sin\phi^1,\quad
x^3=r\sin\theta\cos\phi^2,\quad x^4=r\sin\theta\sin\phi^2.
\end{equation}
Upon expressing the metric in these coordinates it follows that
\begin{align}
\begin{split}
g=&e^{2U+2\alpha}(dr^2+r^{2}d\theta^2)+e^{2U}\lambda_{ij}\left(d\phi^i+A^i_l d y^l\right)\left(d\phi^j+A^j_l d y^l\right)\\
=&\delta+\underbrace{(e^{2U+2\alpha}-1)}_{O_{1}(r^{-1-\kappa})}\underbrace{(dr^2+r^{2}d\theta^2)}_{O_{1}(1)}
+\underbrace{(e^{2U}\lambda_{ij}-\sigma_{ij})}_{O_{1}(r^{-1-\kappa})}
\underbrace{d\phi^{i}d\phi^{j}}_{O_{1}(r^{-2})}\\
&+\underbrace{e^{2U}\lambda_{ij}}_{O_{1}(1)}
\underbrace{A_{l}^{j}}_{O_{1}(r^{-3-\kappa})}
\underbrace{d\phi^{i}dy^{l}}_{O_{1}(1)}+\underbrace{e^{2U}\lambda_{ij}}_{O_{1}(r^{2})}
\underbrace{A_{l}^{i}dy^{l}}_{O_{1}(r^{-2-\kappa})}
(\underbrace{d\phi^{j}}_{O_{1}(r^{-1})}+\underbrace{A_{l}^{j}dy^{l}}_{O_{1}(r^{-2-\kappa})})\\
=&\delta+O_{1}(r^{-1-\kappa}),
\end{split}
\end{align}
where we have used $d \rho = O(r)$ and $dz = O(r)$. Similar computations yield the same result for the asymptotics \eqref{F4}, \eqref{F5}.

\section{The Mass Functional}

One of the advantages of Brill data is that it provides a particularly simple expression for the scalar curvature. Namely, as shown in \cite{Alaeeremarks2015} we have
\begin{equation}\label{SCALAR}
e^{2U+2\alpha-2\log r}R=-6\Delta U-2\Delta_{\rho,z}\alpha-6|\nabla U|^2
+\frac{\det\nabla\lambda}{2\rho^{2}}\\
-\frac{1}{4}e^{-2\alpha+2\log r}\lambda_{ij}(A_{\rho,z}^{i}-A_{z,\rho}^{i})
(A_{\rho,z}^{j}-A_{z,\rho}^{j}),
\end{equation}
where $\Delta$ and the norm $|\cdot|$ are with respect to the following flat metric
\begin{equation}
\delta_{3}=r^{2}\left(d r^{2}+r^{2}d\theta^{2}\right)+\frac{r^{4}\sin^{2}(2\theta)}{4} d\phi^{2}
=d\rho^{2}+d z^{2}+\rho^{2} d\phi^{2}
\end{equation}
on an auxiliary $\mathbb{R}^{3}$ in which all quantities are independent of the new variable $\phi\in[0,2\pi]$, and $\Delta_{\rho,z}$ is with respect to the flat metric
$\delta_{2}=d\rho^{2}+dz^{2}$ on the orbit space. Moreover, the notation used for the last term on the first line is shorthand for
\begin{equation}
\det\nabla\lambda
=\det
\begin{pmatrix}
\nabla\lambda_{11}&\nabla\lambda_{12}\\
\nabla\lambda_{12}&\nabla\lambda_{22}
\end{pmatrix}
=\delta_{3}(\nabla\lambda_{11},\nabla\lambda_{22})-|\nabla \lambda_{12}|^2.
\end{equation}
From \eqref{SCALAR} one may integrate by parts to obtain a closed form expression \cite{Alaeeremarks2015} for the mass
\begin{align}\label{massf}
\begin{split}
m=&\frac{1}{8}\int_{\mathbb{R}^{3}}\left(e^{2U+2\alpha-2\log r}R
+6|\nabla U|^{2}-\frac{\det\nabla\lambda}{2\rho^{2}}\right)dx\\
&+\frac{1}{32}\int_{\mathbb{R}^{3}}
e^{-2\alpha+2\log r}\lambda_{ij}(A_{\rho,z}^{i}-A_{z,\rho}^{i})
(A_{\rho,z}^{j}-A_{z,\rho}^{j})dx
+\frac{\pi}{2}\sum_{\varsigma=\pm}\int_{\Gamma_{\varsigma}}\alpha_{\varsigma}dz,
\end{split}
\end{align}
where the volume form $dx$ is again with respect to $\delta_{3}$.

The next goal is to relate the right-hand side of \eqref{massf} to a reduced form of a harmonic energy. In order to accomplish this, the scalar curvature will be replaced by an expression involving potentials for the angular momentum. Consider the 1-form
\begin{equation}\label{123456}
\mathcal{P}_{(l)}=2\star \left(p(\eta_{(l)})\wedge\eta_{(1)}\wedge\eta_{(2)}\right)
=2\epsilon_{abcd}p^{b}_{s}\eta_{(l)}^{s}\eta_{(1)}^{c}\eta_{(2)}^{d}dx^{a}
\end{equation}
on $M^{4}$, where $\epsilon_{abcd}$ is the volume form for $g$, $\star$ is the Hodge star, and $p$ is the momentum tensor. A computation, utilizing the momentum constraint and the fact that $\eta_{(l)}$ is a Killing field, then shows that
\begin{equation}
d\mathcal{P}_{(l)}=-8\pi J(\eta_{(l)})\epsilon_{abcd}\eta_{(1)}^{c}\eta_{(2)}^{d}dx^{a}\wedge dx^{b}.
\end{equation}
Thus, under the the assumptions that $J(\eta_{(l)})=0$, $l=1,2$ and $M^{4}$ is simply connected, twist potentials exist such that
\begin{equation}
d\zeta^{l}=\mathcal{P}_{(l)}, \text{ }\text{ }\text{ }\text{ }\text{ }\text{ }l=1,2.
\end{equation}
It is then clear from \eqref{123456} that $\partial_{z}\zeta^{l}|_{\Gamma}=0$, so that $\zeta^{l}$ is constant on each axis $\Gamma_{\pm}$. These constants in turn determine the ADM angular momenta
\begin{align}\label{a6}
\begin{split}
\mathcal{J}_{l}&=\frac{1}{8\pi}\int_{S_{\infty}}p(\eta_{(l)},\nu)\\
&=\lim_{r\rightarrow0}\frac{1}{8\pi}\int_{\partial B(r)}k(\partial_{\phi^{l}},\nu)dV\\
&=\lim_{r\rightarrow0}\frac{1}{16\pi}\int_{\partial B(1)}k(\partial_{\phi^{l}},\nu)e^{3U+\alpha}r^{3}\sin(2\theta) d\theta d\phi^{1}d\phi^{2}\\
&=\lim_{r\rightarrow0}\frac{1}{16\pi}\int_{\partial B(1)}\partial_{\theta}\zeta^{l} d\theta d\phi^{1}d\phi^{2}\\
&=\frac{\pi}{4}(\zeta^{l}|_{\Gamma_{-}}-\zeta^{l}|_{\Gamma_{+}}),
\end{split}
\end{align}
where $B(r)$ is the coordinate ball of radius $r$ centered at the origin. Furthermore, consider the frame
\begin{equation}
e_1= e^{-U-\alpha+\log r}\left(\partial_\rho - A^{i}_{\rho} \partial_{\phi^i}\right), \text{ }\text{ }\text{ }e_2 = e^{-U-\alpha+\log r}\left(\partial_z - A^{i}_{z} \partial_{\phi^i}\right),\text{ }\text{ }\text{ }
 e_{i+2} = e^{-U} \partial_{\phi^i},\text{ }\text{ }i=1,2,
\end{equation}
with dual co-frame
\begin{equation}
\theta^1= e^{U+\alpha-\log r}d\rho,\text{ } \text{ } \text{ } \text{ } \text{ } \text{ }
\theta^2= e^{U+\alpha-\log r}dz,\text{ }\text{ }\text{ }\text{ }\text{ }\text{ }
\theta^{i+2}=e^{U}\left( d\phi^i+A^i_l d y^l\right),\text{ }\text{ }\text{ }\text{ }i=1,2,
\end{equation}
so that the metric may be written as
\begin{equation}
g=(\delta_{2})_{ln}\theta^{l}\theta^{n}+\lambda_{ij}\theta^{i+2}\theta^{j+2},
\end{equation}
and
\begin{equation}\label{equation}
k(e_1,e_{i+2}) = -\frac{e^{-4U-\alpha+\log r}}{2\rho} \partial_z \zeta^{i},\text{ }\text{ }\text{ }\text{ }\text{ }\text{ }\text{ }
k(e_2,e_{i+2})= \frac{e^{-4U-\alpha+\log r}}{2\rho} \partial_\rho \zeta^{i}.
\end{equation}
Therefore, in light of the maximal condition $\tr_{g}k=0$ we have
\begin{align}\label{asdf}
\begin{split}
R=&16\pi\mu+|k|_{g}^{2}\\
=&16\pi\mu+\frac{e^{-8U-2\alpha+2\log r}}{2\rho^{2}}\nabla\zeta^{t}\lambda^{-1}\nabla\zeta\\
&+k(e_{1},e_{1})^{2}+2k(e_{1},e_{2})^{2}+k(e_{2},e_{2})^{2}
+\lambda^{ij}\lambda^{ln}k(e_{i},e_{l})k(e_{j},e_{n}),
\end{split}
\end{align}
where
\begin{equation}
\nabla \zeta^t\lambda^{-1}\nabla \zeta
=\begin{pmatrix}
\nabla \zeta^1&
\nabla \zeta^2
\end{pmatrix}
\begin{pmatrix}
\lambda^{11}&\lambda^{12}\\
\lambda^{12}&\lambda^{22}
\end{pmatrix}
\begin{pmatrix}
\nabla \zeta^1\\
\nabla \zeta^2
\end{pmatrix}
=\sum_{i,j=1,2}\lambda^{ij}\delta_{3}(\nabla\zeta^{i},\nabla\zeta^{j}).
\end{equation}
It follows that by combining \eqref{massf} and \eqref{asdf}
\begin{align}\label{lkjh}
\begin{split}
m=&\mathcal{M}(U,\lambda,\zeta)+\frac{1}{8}\int_{\mathbb{R}^{3}}\left(16\pi e^{2U+2\alpha-2\log r}\mu
+\frac{1}{4}e^{-2\alpha+2\log r}\lambda_{ij}(A_{\rho,z}^{i}-A_{z,\rho}^{i})
(A_{\rho,z}^{j}-A_{z,\rho}^{j})\right)dx\\
&+\frac{1}{8}\int_{\mathbb{R}^{3}}e^{2U+2\alpha-2\log r}\left(k(e_{1},e_{1})^{2}+2k(e_{1},e_{2})^{2}+k(e_{2},e_{2})^{2}
+\lambda^{ij}\lambda^{ln}k(e_{i},e_{l})k(e_{j},e_{n})\right)dx,
\end{split}
\end{align}
where
\begin{equation}\label{MASSFUN}
\mathcal{M}(U,\lambda,\zeta)
=\frac{1}{8}\int_{\mathbb{R}^{3}}
\left(6|\nabla U|^{2}-\frac{\det\nabla\lambda}{2\rho^{2}}+\frac{e^{-6U}}{2\rho^{2}}\nabla\zeta^{t}
\lambda^{-1}\nabla\zeta\right)dx
+\frac{\pi}{2}\sum_{\varsigma=\pm}\int_{\Gamma_{\varsigma}}\alpha_{\varsigma}dz.
\end{equation}

The mass functional $\mathcal{M}$ is to be related to a reduced harmonic energy. However, it is not even immediately apparent from the expression in \eqref{MASSFUN} that this quantity is nonnegative in general. It turns out that this may be resolved with an appropriate transformation or change of variables $(\lambda_{11},\lambda_{22},\lambda_{12})\rightarrow (V,W)$; note that since $\det\lambda=\rho^{2}$ there are only two independent functions contained in $\lambda$. Define the new variables by
\begin{equation}\label{VW}
V=\frac{1}{2}\log\left(\frac{\lambda_{11}\cos^{2}\theta}{\lambda_{22}\sin^{2}\theta}\right),
\text{ }\text{ }\text{ }\text{ }\text{ }\text{ }W=\sinh^{-1}\left(\frac{\lambda_{12}}{\rho}\right),
\end{equation}
and note the inverse transformation is then given by
\begin{equation}\label{inverse}
\lambda_{11}=\left(\sqrt{\rho^2+z^2}-z\right)e^{V}\cosh W, \text{ }\text{ }\text{ }\text{ }\text{ }
\lambda_{22}=\left(\sqrt{\rho^2+z^2}+z\right)e^{-V}\cosh W,\text{ }\text{ }\text{ }\text{ }\text{ }
\lambda_{12}=\rho\sinh W.
\end{equation}
From the third equation in \eqref{inverse}, and \eqref{F12}, we find that $W=0$ on $\Gamma$. Using this fact together with the first two equations in \eqref{inverse}, and recalling that there are no conical singularities \eqref{cone}, shows that
\begin{equation}\label{V}
V=2\alpha_{+}\text{ }\text{ }\text{ on }\text{ }\text{ }\Gamma_{+},\text{ }\text{ }\text{ }\text{ }\text{ }\text{ } V=-2\alpha_{-}\text{ }\text{ }\text{ on }\text{ }\text{ }\Gamma_{-}.
\end{equation}
Consider now the following harmonic functions on $(\mathbb{R}^{3}\setminus\Gamma,\delta_{3})$ which are naturally associated with the above transformation:
\begin{equation}\label{harmonicfunction}
h_{1}=\frac{1}{2}\log\rho,\text{ }\text{ }\text{ }\text{ }\text{ }\text{ }\text{ }\text{ }\text{ }h_{2}=\frac{1}{2}\log\left(\frac{\sqrt{\rho^{2}+z^{2}}-z}{
\sqrt{\rho^2+z^{2}}+z}\right).
\end{equation}
In particular a computation yields
\begin{equation}
-\frac{\det\nabla\lambda}{\rho^2}=|\nabla V|^2+|\nabla W|^2+\sinh^{2}W\abs{\nabla\left(V+h_2\right)}^2+2\delta_{3}(\nabla h_2,\nabla V),
\end{equation}
and the last term may be integrated away to the boundary
\begin{align}
\begin{split}
\frac{1}{8}\int_{\mathbb{R}^{3}}\delta_{3}(\nabla h_{2},\nabla V)dx
=&-\lim_{\varepsilon\rightarrow 0}\frac{1}{8}\int_{\rho=\varepsilon}V\partial_{\rho}h_{2}\\
=&\frac{\pi}{4}\left(\int_{\Gamma_{-}}Vdz-\int_{\Gamma_{+}}Vdz\right)\\
=&-\frac{\pi}{2}\sum_{\varsigma=\pm}\int_{\Gamma_{\varsigma}}\alpha_{\varsigma}dz.
\end{split}
\end{align}
Notice that this boundary term cancels the one in \eqref{MASSFUN}, and so
it follows that
\begin{align}\label{MASSFUNCTION}
\begin{split}
\mathcal{M}(U,V,W,\zeta^{1},\zeta^{2})=&\frac{1}{16}\int_{\mathbb{R}^{3}}12|\nabla U|^{2}+|\nabla V|^{2}+|\nabla W|^{2}
+\sinh^{2}W|\nabla (V+h_{2})|^{2}
dx\\
&
+\frac{1}{16}\int_{\mathbb{R}^{3}}
e^{-6h_{1}-6U+h_{2}+V}\cosh W
\left|e^{-h_{2}-V}\tanh W\nabla \zeta^{1}-\nabla \zeta^{2}\right|^{2}dx\\
&+\frac{1}{16}\int_{\mathbb{R}^{3}}
\frac{e^{-6h_{1}-6U-h_{2}-V}}{\cosh W}|\nabla \zeta^{1}|^{2}dx.
\end{split}
\end{align}
This version of the mass functional is clearly nonnegative, and together with \eqref{lkjh} it establishes the positive mass theorem for generalized Brill initial data. In the next section we will relate this mass functional to a harmonic energy, and establish the mass-angular momentum inequality.

\section{Convexity and the Global Minimizer}
\label{sec4}

Consider the symmetric space $SL(3,\mathbb{R})/SO(3)\cong\mathbb{R}^{5}$ endowed (\cite{helgason1979differential}, \cite{jost2008riemannian}, \cite{maison1979ehlers}) with the nonpositively curved metric
\begin{equation}\label{16}
ds^{2}=12du^{2}+\cosh^{2}w \text{ }\!dv^{2}+dw^{2}
+\frac{e^{-(6u+v)}}{\cosh w}(d\zeta^{1})^{2}
+e^{-6u+v}\cosh w
\left(e^{-v}\tanh w \text{ }\!d\zeta^{1}-d\zeta^{2}\right)^{2}.
\end{equation}
The harmonic energy of a map $\tilde{\Psi}=(u,v,w,\zeta^{1},\zeta^{2}):\mathbb{R}^{3}\rightarrow
SL(3,\mathbb{R})/SO(3)$, on a domain $\Omega\subset\mathbb{R}^{3}$, is then given by
\begin{align}\label{energy1}
\begin{split}
E_{\Omega}(\tilde{\Psi})=&\int_{\Omega}12|\nabla u|^{2}
+\cosh^{2}w|\nabla v|^{2}
+|\nabla w|^{2}
+\frac{e^{-6u-v}}{\cosh w}|\nabla \zeta^{1}|^{2}dx
\\
&+\int_{\Omega}
e^{-6u+v}\cosh w
\left|e^{-v}\tanh w\nabla \zeta^{1}-\nabla \zeta^{2}\right|^{2}dx.
\end{split}
\end{align}
If $\Omega$ has a trivial intersection with the axes of rotation $\Gamma=\{\rho=0\}$, and we write $u=U+h_{1}$, $v=V+h_{2}$, and $w=W$ where $h_{1}$ and $h_{2}$ are the harmonic functions defined in \eqref{harmonicfunction},
then with an integration by parts the reduced energy $\mathcal{I}_{\Omega}$ of the map $\Psi=(U,V,W,\zeta^{1},\zeta^{2})$ may be expressed in terms the harmonic energy of $\tilde{\Psi}$ by
\begin{equation}\label{51}
\mathcal{I}_{\Omega}(\Psi)=E_{\Omega}(\tilde{\Psi})
-12\int_{\partial\Omega}
(h_{1}+2U)\partial_{\nu}h_{1}-\int_{\partial\Omega}(h_{2}+2V)\partial_{\nu}h_{2},
\end{equation}
where $\nu$ denotes the unit outer normal to the boundary $\partial\Omega$ and
\begin{align}
\begin{split}
\mathcal{I}_{\Omega}(\Psi)=&\int_{\Omega}12|\nabla U|^{2}+|\nabla V|^{2}+|\nabla W|^{2}
+\sinh^{2}W|\nabla (V+h_{2})|^{2}
+\frac{e^{-6h_{1}-6U-h_{2}-V}}{\cosh W}|\nabla \zeta^{1}|^{2}dx
\\
&+\int_{\Omega}
e^{-6h_{1}-6U+h_{2}+V}\cosh W
\left|e^{-h_{2}-V}\tanh W\nabla \zeta^{1}-\nabla \zeta^{2}\right|^{2}dx.
\end{split}
\end{align}
Observe that $\mathcal{I}=\mathcal{I}_{\mathbb{R}^{3}}=16\mathcal{M}$
where $\mathcal{M}$ is the mass functional \eqref{MASSFUNCTION}. The reduced energy $\mathcal{I}$ may be considered a regularization of $E$ since the infinite terms $\int|\nabla h_{1}|^{2}$ and $\int \cosh^{2}W|\nabla h_{2}|^{2}$ have been removed. Furthermore,
since the two functionals only differ by boundary terms they have the same critical points.

Let $\tilde{\Psi}_{0}=(u_{0},v_{0},w_{0},\zeta^{1}_{0},\zeta^{2}_{0})$ denote the extreme Myers-Perry harmonic map (see Appendix B), and
let $\Psi_{0}=(U_{0},V_{0},W_{0},\zeta^{1}_{0},\zeta^{2}_{0})$ be the associated renormalized map with $u_{0}=U_{0}+h_{1}$, $v_{0}=V_{0}+h_{2}$, and $w_{0}=W_{0}$. Therefore,
$\Psi_{0}$ is a critical point of $\mathcal{I}$. The purpose of this section is to show that $\Psi_{0}$ achieves the global minimum for $\mathcal{I}$.

\begin{theorem}\label{infimum}
Suppose that $\Psi=(U,V,W,\zeta^{1},\zeta^{2})$ is smooth and satisfies the asymptotics \eqref{fall1}-\eqref{fall4.1}
with $\zeta^{1}|_{\Gamma}=\zeta^{1}_{0}|_{\Gamma}$ and $\zeta^{2}|_{\Gamma}=\zeta^{2}_{0}|_{\Gamma}$, then there exists a constant $C>0$ such that
\begin{equation}\label{53}
\mathcal{I}(\Psi)-\mathcal{I}(\Psi_{0})
\geq C\left(\int_{\mathbb{R}^{3}}
\operatorname{dist}_{SL(3,\mathbb{R})/SO(3)}^{6}(\Psi,\Psi_{0})dx
\right)^{\frac{1}{3}}.
\end{equation}
\end{theorem}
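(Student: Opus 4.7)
The approach is to adapt the geodesic-convexity technique of Schoen--Zhou \cite{schoen2013convexity}, used for the $3+1$ mass-angular momentum inequality with target $\mathbb{H}^{2}$, to the present higher-dimensional target $SL(3,\mathbb{R})/SO(3)$. For each $x\in\mathbb{R}^{3}\setminus\Gamma$, connect $\tilde{\Psi}_{0}(x)$ to $\tilde{\Psi}(x)$ by the unique minimizing geodesic $\tilde{\Psi}_{t}(x)$, $t\in[0,1]$, in the complete, simply connected, nonpositively curved symmetric space $SL(3,\mathbb{R})/SO(3)$. Write $d(x):=\operatorname{dist}_{SL(3,\mathbb{R})/SO(3)}(\tilde{\Psi}(x),\tilde{\Psi}_{0}(x))$, so that $\operatorname{dist}(\tilde{\Psi}_{t},\tilde{\Psi}_{0})=t\,d(x)$ by the affine parametrization of geodesics.

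The nonpositive curvature of the target then yields the pointwise second-variation inequality
\begin{equation}
\frac{d^{2}}{dt^{2}}\,e(\tilde{\Psi}_{t})(x)\;\ge\;2\,|\nabla d|^{2}(x),
\end{equation}
where $e(\tilde{\Psi}_{t})=|\nabla\tilde{\Psi}_{t}|^{2}$ is the target-space energy density. Integrating over an exhaustion $\Omega_{\epsilon,R}=B_{R}\setminus N_{\epsilon}(\Gamma)$ of $\mathbb{R}^{3}\setminus\Gamma$ gives uniform convexity of $t\mapsto E_{\Omega_{\epsilon,R}}(\tilde{\Psi}_{t})$ with Hessian bounded below by $2\int_{\Omega_{\epsilon,R}}|\nabla d|^{2}dx$. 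Via the identity \eqref{51} — in which the difference $E-\mathcal{I}$ is a sum of boundary integrals involving only the fixed harmonic functions $h_{1},h_{2}$ and the endpoint values of $U,V$ — this transfers to convexity of $t\mapsto\mathcal{I}_{\Omega_{\epsilon,R}}(\Psi_{t})$ with the same Hessian lower bound. Since $\Psi_{0}$ is a critical point of $\mathcal{I}$ and $\Psi$ has the prescribed axis values $\zeta^{i}|_{\Gamma}=\zeta_{0}^{i}|_{\Gamma}$, the first variation in $t$ at $t=0$ vanishes after the admissible test directions are checked against the asymptotics. Integrating the convexity inequality twice in $t\in[0,1]$ and passing to the limits $\epsilon\to 0$, $R\to\infty$ produces
\begin{equation}\label{plan:gap}
\mathcal{I}(\Psi)-\mathcal{I}(\Psi_{0})\;\ge\;\int_{\mathbb{R}^{3}}|\nabla d|^{2}\,dx.
\end{equation}

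The prescribed axis data and the asymptotics assumed on $\Psi$ (together with the known behavior of $\Psi_{0}$ from the appendix) force $d(x)\to 0$ both at infinity and on $\Gamma$, so $d\in\dot{H}^{1}(\mathbb{R}^{3})$. The Sobolev embedding $\dot{H}^{1}(\mathbb{R}^{3})\hookrightarrow L^{6}(\mathbb{R}^{3})$ then gives
\begin{equation}
\left(\int_{\mathbb{R}^{3}}d^{6}\,dx\right)^{1/3}\le C_{S}\int_{\mathbb{R}^{3}}|\nabla d|^{2}\,dx,
\end{equation}
which combined with \eqref{plan:gap} yields the stated gap inequality with $C=C_{S}^{-1}$.

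The main obstacle I expect is the rigorous justification of the limiting procedure for the boundary terms generated by the two integrations by parts on $\partial\Omega_{\epsilon,R}$. On $\partial B_{R}$ the fall-off conditions toward the designated asymptotically flat end should suffice, but on the tube $\partial N_{\epsilon}(\Gamma)$ the original target coordinates have logarithmic blow-up coming from $h_{1},h_{2}$ even though the renormalized variables $U,V,W$ and the distance $d$ remain bounded. One must show that the contribution of these boundary integrals decays with $\epsilon$, using the explicit local structure of $\Psi_{0}$ near the axis (from the Myers-Perry appendix) and the asymptotic matching built into \eqref{F10}--\eqref{F12} and \eqref{cone}. A parallel subtlety arises in verifying that the first variation of $\mathcal{I}$ at $\Psi_{0}$ truly vanishes in the direction $\dot{\tilde{\Psi}}_{0}$, since the relevant variation is tangential but has nontrivial behavior on $\Gamma$; again this hinges on the explicit form of $\Psi_{0}$.
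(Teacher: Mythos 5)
The overall strategy you describe — geodesic convexity for the nonpositively curved target, the $E\leftrightarrow\mathcal{I}$ identity \eqref{51}, first variation vanishing, double integration in $t$, then Sobolev — is the same skeleton as the paper's proof. However, there is a genuine gap in the step where you claim convexity of $E_{\Omega_{\epsilon,R}}$ ``transfers'' to $\mathcal{I}_{\Omega_{\epsilon,R}}$ through \eqref{51}. You assert the boundary terms there involve ``only the fixed harmonic functions $h_{1},h_{2}$ and the endpoint values of $U,V$,'' but this mischaracterizes them: the boundary integrals are
\begin{equation}
-12\int_{\partial\Omega}(h_{1}+2U_{t})\,\partial_{\nu}h_{1}\;-\;\int_{\partial\Omega}(h_{2}+2V_{t})\,\partial_{\nu}h_{2},
\end{equation}
and they depend on $U_{t}$ and $V_{t}$ at \emph{every} $t$ along the geodesic. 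For a geodesic $\tilde{\Psi}_{t}$ in $SL(3,\mathbb{R})/SO(3)$ with metric \eqref{16}, the component functions $u_{t}$ and $v_{t}$ are coupled to $w_{t}$ and $\zeta^{i}_{t}$ through the $\cosh^{2}w$ and exponential factors, so $u_{t}$ is generally not affine in $t$ and $v_{t}$ is generally not constant in $t$. Consequently $\frac{d^{2}}{dt^{2}}U_{t}$ and $\frac{d^{2}}{dt^{2}}V_{t}$ do not vanish on $\partial\Omega_{\epsilon,R}$, these boundary terms contribute nontrivially to the second variation of $\mathcal{I}_{\Omega_{\epsilon,R}}$, and you have no control over their sign. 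Your plan to ``pass to the limits $\epsilon\to 0$, $R\to\infty$'' does not rescue this, because you would need the boundary contributions to vanish in the limit, and you offer no mechanism for that.

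The paper's resolution, which your proposal is missing, is the three-step cut-and-paste replacement $\Psi\mapsto\Psi_{\delta,\varepsilon}$ of Lemmas \ref{cutandpaste1}--\ref{cutandpaste3} and Proposition \ref{proposition}. After this surgery, $\Psi_{\delta,\varepsilon}\equiv\Psi_{0}$ outside $B_{2/\delta}$ and $(V,W,\zeta^{1},\zeta^{2})\equiv(V_{0},W_{0},\zeta^{1}_{0},\zeta^{2}_{0})$ on $\mathcal{A}_{\delta,\varepsilon}$. In these regions the geodesic from $\tilde{\Psi}_{0}$ to $\tilde{\Psi}_{\delta,\varepsilon}$ moves only in the $u$-direction, where the metric \eqref{16} restricts to the flat factor $12\,du^{2}$; hence $U_{t}=U_{0}+t(U_{\delta,\varepsilon}-U_{0})$ is genuinely affine and $V_{t}\equiv V_{0}$ there, so the second $t$-derivative of the boundary terms is exactly zero. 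This is what makes \eqref{108}--\eqref{109} go through, and the same cut-and-paste is also needed to justify differentiating under the integral (the boundedness of $(U_{\delta,\varepsilon}-U_{0})^{2}e^{-6t(U_{\delta,\varepsilon}-U_{0})}$ near the origin) and to verify \eqref{56} via the Euler--Lagrange equations for $\Psi_{0}$ with boundary pieces that demonstrably vanish as $\delta_{0},\varepsilon_{0}\to 0$. Finally, one must still show $\mathcal{I}(\Psi_{\delta,\varepsilon})\to\mathcal{I}(\Psi)$ and that $\int d^{6}$ for the truncated maps converges to the original, which is the content of \eqref{118}--\eqref{121}. Without the cut-and-paste device your argument does not close.
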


The primary idea behind this result is the fact that the harmonic energy, of maps with a nonpositively curved target space, is convex along geodesic deformations. This property was exploited in \cite{schoen2013convexity} to achieve a similar result where the role of extreme Myers-Perry was played by extreme Kerr. In order to apply this strategy it is necessary to show that the reduced energy inherits convexity from the harmonic energy, and for this it is helpful to cut-off the given map data in certain regimes and paste in an extreme Myers-Perry map.
More precisely, let $\delta,\varepsilon>0$ be
small parameters and define sets $\Omega_{\delta,\varepsilon}=\{\delta< r<2/\delta;
\rho>\varepsilon\}$ and $\mathcal{A}_{\delta,\varepsilon}=B_{2/\delta}\setminus
\Omega_{\delta,\varepsilon}$, where $B_{2/\delta}$ is the ball of radius $2/\delta$ centered at the origin. Suppose that
$\Psi$ has already undergone the cut-and-paste procedure, and thus satisfies
\begin{equation}\label{54}
\operatorname{supp}(U-U_{0})\subset B_{2/\delta},\text{ }\text{ }\text{ }\text{ }\text{ }
\operatorname{supp}(V-V_{0},W-W_{0},
\zeta^{1}-\zeta^{1}_{0},\zeta^{2}-\zeta^{2}_{0})\subset \Omega_{\delta,\varepsilon}.
\end{equation}
Let $\tilde{\Psi}_{t}$, $t\in[0,1]$, be a geodesic in $SL(3,\mathbb{R})/SO(3)$ which connects
$\tilde{\Psi}_{1}=\tilde{\Psi}$ and $\tilde{\Psi}_{0}$. Then $\Psi_{t}\equiv\Psi_{0}$ outside $B_{2/\delta}$ and
$(V_{t},W_{t},\zeta^{1}_{t},\zeta^{2}_{t})\equiv(V_{0},W_{0},\zeta^{1}_{0},\zeta^{2}_{0})$
on $\mathcal{A}_{\delta,\varepsilon}$, so that in particular $U_{t}=U_{0}+t(U-U_{0})$ and $V_{t}=V_{0}$ on these
regions. This linear behavior of $U_{t}$ and constancy of $V_{t}$ (in $t$) ensures that the boundary terms of \eqref{51} do not contribute when implementing convexity of the harmonic energy. From this it follows that
\begin{equation}\label{55}
\frac{d^{2}}{dt^{2}}\mathcal{I}(\Psi_{t})
\geq 2\int_{\mathbb{R}^{3}}|\nabla\operatorname{dist}_{SL(3,\mathbb{R})/SO(3)}(\Psi,\Psi_{0})|^{2}dx.
\end{equation}
Furthermore, since $\Psi_{0}$ is a critical point
\begin{equation}\label{56}
\frac{d}{dt}\mathcal{I}(\Psi_{t})|_{t=0}=0.
\end{equation}
Therefore, the conclusion of Theorem \ref{infimum} is achieved by integrating \eqref{55} and using a Sobolev inequality. In what follows we will justify each of these steps.

In order to proceed we will record the appropriate asymptotic behavior of $\Psi$ and $\Psi_{0}$. In the statements below, it is important to keep in mind that in the relevant coordinate system, the flat metric on $\mathbb{R}^{3}$ is given by
\begin{equation}
r^{2}\left(dr^{2}+r^{2}d\theta^{2}\right)+\frac{r^{4}\sin^{2}(2\theta)}{4} d\phi^{2}
=d\rho^{2}+dz^{2}+\rho^{2} d\phi^{2},
\end{equation}
with Euclidean volume form
\begin{equation}
dx=\frac{1}{2}r^{5}\sin(2\theta)dr \wedge d\theta \wedge d\phi
=\rho d\rho \wedge dz \wedge d\phi,
\end{equation}
where the transformation between polar and cylindrical coordinates is given in \eqref{polarcoord}. Thus, for example, the norms of vectors when expressed in these polar coordinates appear to have extra fall-off as compared to the corresponding expressions in traditional polar coordinates. The motivation for using this nonstandard version of polar coordinates is related to the derivation of the mass functional \eqref{MASSFUN}. For later use, we note here that the second harmonic function of \eqref{harmonicfunction} takes a simple form when expressed in polar coordinates
\begin{equation}\label{harmonicfunction1}
h_{2}=\frac{1}{2}\log\left(\frac{\sqrt{\rho^{2}+z^{2}}-z}{
\sqrt{\rho^2+z^{2}}+z}\right)=\frac{1}{2}\log\left(\frac{1-\cos(2\theta)}{1+\cos(2\theta)}\right)
=\log(\tan\theta).
\end{equation}

When stating the asymptotics there are three regimes to analyze, namely the designated asymptotically flat end ($r\rightarrow \infty$), the nondesignated end ($r\rightarrow 0$) which is either asymptotically flat or asymptotically cylindrical, and the limit at the axis ($\rho\rightarrow 0$ with $\delta\leq r\leq 2/\delta$). A motivation for the choice of asymptotics is to have the weakest conditions which guarantee finite reduced energy, and include the decay rates of the Myers-Perry harmonic maps (both extreme and non-extreme); these properties are easily shown to be satisfied by the asymptotics below.
In what follows, $\kappa>0$ is a fixed  parameter that may take on arbitrarily small values. Let us consider the asymptotically flat end first. We require that as $r\rightarrow\infty$ the following decay occurs
\begin{equation}\label{fall1}
U=O(r^{-1-\kappa}),\text{ }\text{ }\text{ }\text{ }V=O(r^{-1-\kappa}),\text{ }\text{ }\text{ }\text{ }W=\sqrt{\rho} O(r^{-2-\kappa}),
\end{equation}
\begin{equation}\label{fall2}
|\nabla U|=O(r^{-3-\kappa}),\text{ }\text{ }\text{ }\text{ }|\nabla V|=O(r^{-3-\kappa}),\text{ }\text{ }\text{ }\text{ }|\nabla W|=\rho^{-\frac{1}{2}} O(r^{-2-\kappa}),
\end{equation}
\begin{equation}\label{fall3}
|\nabla \zeta^{1}|=\rho\sqrt{\sin\theta} O(r^{-2-\kappa}),\text{ }\text{ }\text{ }\text{ }\text{ }\text{ }\text{ }\text{ }|\nabla \zeta^{2}|=\rho\sqrt{\cos\theta} O(r^{-2-\kappa}).
\end{equation}
Next consider asymptotics in the nondesignated end, which are broken up into two cases. As $r\rightarrow 0$ we require in the asymptotically flat case that
\begin{equation}\label{fall1.0}
U=-2\log r+O(1),\text{ }\text{ }\text{ }\text{ }V=O(1),\text{ }\text{ }\text{ }\text{ } W=\sqrt{\rho}O(r^{-1}),
\end{equation}
\begin{equation}\label{fall2.0}
|\nabla U|=O(r^{-2}),\text{ }\text{ }\text{ }\text{ }|\nabla V|=O(r^{-2}),\text{ }\text{ }\text{ }\text{ }|\nabla W|=\rho^{-\frac{1}{2}} O(r^{-1}),
\end{equation}
\begin{equation}\label{fall4.0}
|\nabla \zeta^{1}|=\rho\sqrt{\sin\theta} O(r^{-8+\kappa}),\text{ }\text{ }\text{ }\text{ }\text{ }\text{ }\text{ }\text{ }|\nabla \zeta^{2}|=\rho\sqrt{\cos\theta} O(r^{-8+\kappa}),
\end{equation}
and in the asymptotically cylindrical case that
\begin{equation}\label{fall5.0}
U=-\log r+O(1),\text{ }\text{ }\text{ }\text{ }V=O(1),\text{ }\text{ }\text{ }\text{ } W=\sqrt{\rho}O(r^{-1}),
\end{equation}
\begin{equation}\label{fall6.0}
|\nabla U|=O(r^{-2}),\text{ }\text{ }\text{ }\text{ }|\nabla V|=O(r^{-2}),\text{ }\text{ }\text{ }\text{ }|\nabla W|=\rho^{-\frac{1}{2}} O(r^{-1}),
\end{equation}
\begin{equation}\label{fall7.0}
|\nabla \zeta^{1}|=\rho\sqrt{\sin\theta} O(r^{-5+\kappa}),\text{ }\text{ }\text{ }\text{ }\text{ }\text{ }\text{ }\text{ }|\nabla \zeta^{2}|=\rho\sqrt{\cos\theta} O(r^{-5+\kappa}).
\end{equation}
Furthermore, the near axis asymptotics as $\rho\rightarrow 0$, $\delta\leq r\leq 2/\delta$ are required to satisfy
\begin{equation}\label{fall1.1}
U=O(1),\text{ }\text{ }\text{ }\text{ }V=O(1),\text{ }\text{ }\text{ }\text{ }W=O(\rho^{\frac{1}{2}}),
\end{equation}
\begin{equation}\label{fall2.1}
|\nabla U|=O(1),\text{ }\text{ }\text{ }\text{ }|\nabla V|=O(1),\text{ }\text{ }\text{ }\text{ }|\nabla W|=O(\rho^{-\frac{1}{2}}),
\end{equation}
\begin{equation}\label{fall4.1}
|\nabla \zeta^{1}|=\sqrt{\sin\theta}O(\rho),\text{ }\text{ }\text{ }\text{ }\text{ }\text{ }\text{ }\text{ }|\nabla \zeta^{2}|=\sqrt{\cos\theta}O(\rho).
\end{equation}

We will also have need of precise asymptotics for the extreme Myers-Perry data $\Psi_{0}$ which are derived in Appendix B. In the designated asymptotically flat end as $r\rightarrow\infty$ we have
\begin{equation}\label{fall4}
U_{0}=O(r^{-2}),\text{ }\text{ }\text{ }\text{ }V_{0}=O(r^{-2}),\text{ }\text{ }\text{ }\text{ }W_{0}=\rho O(r^{-6}),
\end{equation}
\begin{equation}\label{fall5}
|\nabla U_{0}|=O(r^{-4}),\text{ }\text{ }\text{ }\text{ }|\nabla V_{0}|=O(r^{-4}),\text{ }\text{ }\text{ }\text{ }|\nabla W_{0}|=O(r^{-6}),
\end{equation}
\begin{equation}\label{fall6}
|\nabla \zeta_{0}^{1}|=\rho\sin^{2}\theta O(r^{-4}),\text{ }\text{ }\text{ }\text{ }\text{ }\text{ }\text{ }\text{ }|\nabla \zeta_{0}^{2}|=\rho\cos^{2}\theta O(r^{-4}).
\end{equation}
In the nondesignated end as $r\rightarrow 0$ the following asymptotics are present
\begin{equation}\label{fall8.0}
U_{0}=-\log r+O(1),\text{ }\text{ }\text{ }\text{ }V_{0}=O(1),\text{ }\text{ }\text{ }\text{ }W_{0}=\rho O(r^{-2}),
\end{equation}
\begin{equation}\label{fall9.0}
|\nabla U_{0}|=O(r^{-2}),\text{ }\text{ }\text{ }\text{ }|\nabla V_{0}|=O(r^{-2}),\text{ }\text{ }\text{ }\text{ }|\nabla W_{0}|=O(r^{-2}),
\end{equation}
\begin{equation}\label{fall10.0}
|\nabla \zeta_{0}^{1}|=\rho\sin^{2}\theta O(r^{-4}),\text{ }\text{ }\text{ }\text{ }\text{ }\text{ }\text{ }\text{ }|\nabla \zeta_{0}^{2}|=\rho\cos^{2}\theta O(r^{-4}).
\end{equation}
Moreover, the near axis asymptotics as $\rho\rightarrow 0$, $\delta\leq r\leq 2/\delta$ are given by
\begin{equation}\label{fall8.1}
U_{0}=O(1),\text{ }\text{ }\text{ }\text{ }V_{0}=O(1),\text{ }\text{ }\text{ }\text{ } W_{0}=O(\rho),
\end{equation}
\begin{equation}\label{fall9.1}
|\nabla U_{0}|=O(1),\text{ }\text{ }\text{ }\text{ }|\nabla V_{0}|=O(1),\text{ }\text{ }\text{ }\text{ }|\nabla W_{0}|=O(1),
\end{equation}
\begin{equation}\label{fall10.1}
|\nabla \zeta_{0}^{1}|=\sin^{2}\theta O(\rho),\text{ }\text{ }\text{ }\text{ }\text{ }\text{ }\text{ }\text{ }|\nabla \zeta_{0}^{2}|=\cos^{2}\theta O(\rho).
\end{equation}

The first task needed to carry out the proof of Theorem \ref{infimum} as outlined above, is to first show that it is possible to
approximate $\mathcal{I}(\Psi)$ by replacing $\Psi$ with a map that satisfies \eqref{54}. This may be achieved as in
\cite{schoen2013convexity} with a three step cut and paste argument. Define cut-off functions, which only take values in the interval $[0,1]$, by
\begin{equation}\label{65}
\overline{\varphi}_{\delta}=\begin{cases}
1 & \text{ if $r\leq\frac{1}{\delta}$,} \\
|\nabla\overline{\varphi}_{\delta}|\leq 2\delta^2 &
\text{ if $\frac{1}{\delta}<r<\frac{2}{\delta}$,} \\
0 & \text{ if $r\geq\frac{2}{\delta}$,} \\
\end{cases}
\end{equation}
\begin{equation}\label{66}
\varphi_{\delta}=\begin{cases}
0 & \text{ if $r\leq\delta$,} \\
|\nabla\varphi_{\delta}|\leq \frac{2}{\delta^{2}} &
\text{ if $\delta<r<2\delta$,} \\
1 & \text{ if $r\geq2\delta$,} \\
\end{cases}
\end{equation}
and
\begin{equation}\label{67}
\phi_{\varepsilon}=\begin{cases}
0 & \text{ if $\rho\leq\varepsilon$,} \\
\frac{\log(\rho/\varepsilon)}{\log(\sqrt{\varepsilon}/
\varepsilon)} &
\text{ if $\varepsilon<\rho<\sqrt{\varepsilon}$,} \\
1 & \text{ if $\rho\geq\sqrt{\varepsilon}$.} \\
\end{cases}
\end{equation}
Let
\begin{equation}\label{68}
\overline{F}_{\delta}(\Psi)=\Psi_{0}
+\overline{\varphi}_{\delta}(\Psi-\Psi_{0})=:
(\overline{U}_{\delta},\overline{V}_{\delta},\overline{W}_{\delta}
,\overline{\zeta}^{1}_{\delta},\overline{\zeta}^{2}_{\delta}),
\end{equation}
so that $\overline{F}_{\delta}(\Psi)=\Psi_{0}$ on $\mathbb{R}^{3}\setminus B_{2/\delta}$.

\begin{lemma}\label{cutandpaste1}
$\lim_{\delta\rightarrow 0}\mathcal{I}(\overline{F}_{\delta}(\Psi))=\mathcal{I}(\Psi).$
\end{lemma}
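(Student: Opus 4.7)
I would split the domain of integration for $\mathcal{I}(\overline{F}_{\delta}(\Psi))$ into three pieces determined by the cut-off $\overline{\varphi}_{\delta}$: the inner ball $B_{1/\delta}$ where $\overline{F}_{\delta}(\Psi) = \Psi$, the outer region $\mathbb{R}^{3}\setminus B_{2/\delta}$ where $\overline{F}_{\delta}(\Psi) = \Psi_{0}$, and the transition annulus $\mathcal{A}_{\delta}:=\{1/\delta<r<2/\delta\}$. On the inner ball the integrands of $\mathcal{I}(\overline{F}_{\delta}(\Psi))$ and $\mathcal{I}(\Psi)$ coincide, and the asymptotics \eqref{fall1}--\eqref{fall4.1} together with \eqref{fall4}--\eqref{fall10.1} are designed precisely to make these integrands globally $L^{1}$; hence dominated convergence gives $\int_{B_{1/\delta}}[\cdots]\to\mathcal{I}(\Psi)$. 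On the outer region the integrand equals that of $\mathcal{I}(\Psi_{0})$, which also lies in $L^{1}$, so its tail integral tends to $0$. The lemma therefore reduces to showing $\mathcal{I}_{\mathcal{A}_{\delta}}(\overline{F}_{\delta}(\Psi))\to 0$.

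\textbf{Annular estimate.} On $\mathcal{A}_{\delta}$ the components of $\overline{F}_{\delta}(\Psi)$ satisfy
\begin{equation*}
\nabla\overline{U}_{\delta}=\nabla U_{0}+(\nabla\overline{\varphi}_{\delta})(U-U_{0})+\overline{\varphi}_{\delta}\nabla(U-U_{0}),
\end{equation*}
and analogously for the remaining variables. Using $|\nabla\overline{\varphi}_{\delta}|\leq 2\delta^{2}$ together with the elementary bound $|\nabla\overline{F}_{\delta}(\Psi)|^{2}\lesssim|\nabla\Psi|^{2}+|\nabla\Psi_{0}|^{2}+|\nabla\overline{\varphi}_{\delta}|^{2}|\Psi-\Psi_{0}|^{2}$, one reduces the estimation of each linear term in the integrand to quadratic integrals of $\nabla\Psi$, $\nabla\Psi_{0}$, and of $\overline{\varphi}_{\delta}$-gradient times $\Psi-\Psi_{0}$. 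Recalling that the volume form in the polar coordinates \eqref{polarcoord} is $dx\sim r^{5}dr\,d\theta\,d\phi$, the far-field asymptotics \eqref{fall1}--\eqref{fall3} yield for instance
\begin{equation*}
\int_{\mathcal{A}_{\delta}}|\nabla U|^{2}dx\lesssim\int_{1/\delta}^{2/\delta}r^{-6-2\kappa}\cdot r^{5}\,dr=O(\delta^{2\kappa}),
\end{equation*}
and
\begin{equation*}
\int_{\mathcal{A}_{\delta}}|\nabla\overline{\varphi}_{\delta}|^{2}|U-U_{0}|^{2}dx\lesssim\delta^{4}\int_{1/\delta}^{2/\delta}r^{-2-2\kappa}\cdot r^{5}\,dr=O(\delta^{2\kappa}).
\end{equation*}
Parallel computations handle the $V$, $W$ quadratic pieces.

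\textbf{Nonlinear and twist terms.} For the cross term $\sinh^{2}\overline{W}_{\delta}|\nabla(\overline{V}_{\delta}+h_{2})|^{2}$, the smallness of $W$ and $W_{0}$ at infinity gives $\sinh^{2}\overline{W}_{\delta}\lesssim W^{2}+W_{0}^{2}+\delta^{4}|W-W_{0}|^{2}$, and the distribution of the $|\nabla h_{2}|^{2}$ factor is absorbed (as it is in the hypothesis $\mathcal{I}(\Psi)<\infty$) by the vanishing rate of $W$ at the axes. For the exponential-weighted twist terms, observe that $U,V,U_{0},V_{0}\to 0$ at infinity and $W,W_{0}\to 0$ there too, so the multiplicative factors $e^{-6(\overline{U}_{\delta}-U_{0})}$, $e^{\pm(\overline{V}_{\delta}-V_{0})}$, $\cosh\overline{W}_{\delta}$ are uniformly bounded on $\mathcal{A}_{\delta}$; the weights $e^{-6h_{1}-h_{2}}$ and $e^{-6h_{1}+h_{2}}$ are exactly the ones already appearing in $\mathcal{I}(\Psi)$ and $\mathcal{I}(\Psi_{0})$, so the task is to estimate $\int_{\mathcal{A}_{\delta}}|\nabla\overline{\zeta}^{i}_{\delta}|^{2}$ with the same weights. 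The matching condition $\zeta^{i}|_{\Gamma}=\zeta^{i}_{0}|_{\Gamma}$ and the decay in \eqref{fall3}, \eqref{fall6} give $|\nabla\overline{\zeta}^{i}_{\delta}|^{2}\lesssim|\nabla\zeta^{i}|^{2}+|\nabla\zeta^{i}_{0}|^{2}+\delta^{4}|\zeta^{i}-\zeta^{i}_{0}|^{2}$, and each of these, weighted and integrated over $\mathcal{A}_{\delta}$, is again $O(\delta^{2\kappa})$ after substitution of the polar volume form.

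\textbf{Expected main obstacle.} The subtlest step is the handling of the weighted twist integrals near the axes $\Gamma$ within $\mathcal{A}_{\delta}$, since $e^{-6h_{1}}=\rho^{-3}$ and $e^{\pm h_{2}}$ become singular as $\rho\to 0$. However, the factors $\rho\sqrt{\sin\theta}$ and $\rho\sqrt{\cos\theta}$ in the decay rates for $\nabla\zeta^{i}$, $\nabla\zeta^{i}_{0}$ are tuned precisely so that $e^{-6h_{1}\mp h_{2}}|\nabla\zeta^{i}|^{2}$ is integrable, and a direct computation shows the annular portion of this integral decays polynomially in $\delta$. Combining the inner, outer, and annular estimates then yields $\mathcal{I}(\overline{F}_{\delta}(\Psi))-\mathcal{I}(\Psi)=o(1)$, completing the proof.
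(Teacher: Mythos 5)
Your proposal is correct and follows essentially the same strategy as the paper's proof: the same three-way decomposition of the domain, dominated convergence on the inner ball, finite energy of $\Psi_{0}$ on the outer region, and a term-by-term annular estimate using the gradient bound on $\overline{\varphi}_{\delta}$ together with the far-field asymptotics. The only place where your write-up is looser than the paper's is the twist-potential estimate: the paper makes explicit the fundamental-theorem-of-calculus bound
\[
|(\zeta^{i}-\zeta^{i}_{0})(\rho,z,\phi)|
\leq\int_{0}^{\rho}|\partial_{\rho}(\zeta^{i}-\zeta^{i}_{0})(\tilde{\rho},z,\phi)|\,d\tilde{\rho}
=\rho^{2}O(r^{-2-\kappa}),
\]
which is what makes the product of $\delta^{4}|\zeta^{i}-\zeta^{i}_{0}|^{2}$ with the weight $e^{-6h_{1}\mp h_{2}}\sim\rho^{-3}$ integrable across the axis. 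You allude to this via the matching condition $\zeta^{i}|_{\Gamma}=\zeta^{i}_{0}|_{\Gamma}$ and the tuned decay rates, so the idea is present, but it is worth spelling out since it is the crux of controlling $I_{5}$ and $I_{6}$.
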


\begin{proof}
Write
\begin{equation}\label{69}
\mathcal{I}(\overline{F}_{\delta}(\Psi))
=\mathcal{I}_{r\leq\frac{1}{\delta}}(\overline{F}_{\delta}(\Psi))
+\mathcal{I}_{\frac{1}{\delta}< r<\frac{2}{\delta}}(\overline{F}_{\delta}(\Psi))
+\mathcal{I}_{r\geq\frac{2}{\delta}}(\overline{F}_{\delta}(\Psi)),
\end{equation}
and observe that $\mathcal{I}_{r\leq\frac{1}{\delta}}(\overline{F}_{\delta}(\Psi))
\rightarrow \mathcal{I}(\Psi)$ by the dominated convergence theorem (DCT). Moreover, since $\Psi_{0}$ has finite reduced
energy $\mathcal{I}_{r\geq\frac{2}{\delta}}(\overline{F}_{\delta}(\Psi))\rightarrow 0$. Now write
\begin{align}\label{70}
\begin{split}
\mathcal{I}_{\frac{1}{\delta}< r<\frac{2}{\delta}}(\overline{F}_{\delta}(\Psi))
=&\underbrace{\int_{\frac{1}{\delta}<r<\frac{2}{\delta}}
12|\nabla \overline{U}_{\delta}|^{2}}_{I_{1}}
+\underbrace{\int_{\frac{1}{\delta}<r<\frac{2}{\delta}}
|\nabla\overline{V}_{\delta}|^{2}}_{I_{2}}
+\underbrace{\int_{\frac{1}{\delta}<r<\frac{2}{\delta}}
|\nabla\overline{W}_{\delta}|^{2}}_{I_{3}}\\
&+\underbrace{\int_{\frac{1}{\delta}<r<\frac{2}{\delta}}
\sinh^{2}\overline{W}_{\delta}|\nabla(\overline{V}_{\delta}+h_{2})|^{2}}_{I_{4}}
+\underbrace{\int_{\frac{1}{\delta}<r<\frac{2}{\delta}}
\frac{\cos\theta}{\rho^{3}\sin\theta}
\frac{e^{-\overline{V}_{\delta}-6\overline{U}_{\delta}}}{\cosh\overline{W}_{\delta}}
|\nabla\overline{\zeta}^{1}_{\delta}|^{2}}_{I_{5}}\\
&+\underbrace{\int_{\frac{1}{\delta}<r<\frac{2}{\delta}}
\frac{\sin\theta}{\rho^{3}\cos\theta}
e^{\overline{V}_{\delta}-6\overline{U}_{\delta}}\cosh\overline{W}_{\delta}
|\nabla\overline{\zeta}^{2}_{\delta}-e^{-\overline{V}_{\delta}}\cot\theta \tanh\overline{W}_{\delta}
\nabla\overline{\zeta}^{1}_{\delta}|^{2}}_{I_{6}}.
\end{split}
\end{align}
We have
\begin{equation}\label{71}
I_{1}\leq C\int_{0}^{\frac{\pi}{2}}\int_{\frac{1}{\delta}}^{\frac{2}{\delta}}
\left(\underbrace{|\nabla U|^{2}}_{O(r^{-6-2\kappa})}+\underbrace{|\nabla U_{0}|^{2}}_{O(r^{-8})}
+\underbrace{(U-U_{0})^{2}}_{O(r^{-2-2\kappa})}
\underbrace{|\nabla\overline{\varphi}_{\delta}|^{2}}_{ O(\delta^{4})}\right)r^{5}\sin(2\theta)dr d\theta\rightarrow 0.
\end{equation}
Moreover, a similar computation shows that $I_{2}\rightarrow 0$ and $I_{3}\rightarrow 0$.

Next observe that since
\begin{equation}\label{001}
\sinh\overline{W}_{\delta}=\sqrt{\rho}
O(r^{-2-\kappa}),
\end{equation}
we have
\begin{equation}
I_{4}\leq \int_{0}^{\frac{\pi}{2}}\int_{\frac{1}{\delta}}^{\frac{2}{\delta}}
\rho O(r^{-4-2\kappa})\left(\underbrace{|\nabla V|^{2}}_{O(r^{-6-2\kappa})}+
\underbrace{|\nabla V_{0}|^{2}}_{O(r^{-8})}
+\underbrace{|\nabla h_{2}|^{2}}_{O(\rho^{-2})}
+\underbrace{(V-V_{0})^{2}}_{O(r^{-2-2\kappa})}\underbrace{|\nabla\overline{\varphi}_{\delta}|^{2}}_{ O(\delta^{4})}\right)r^{5}\sin(2\theta)dr d\theta\rightarrow 0.
\end{equation}

In order to estimate the 5th integral, note that \eqref{fall3} and \eqref{fall6} combined with the fact that $(\zeta^{i}-\zeta_{0}^{i})|_{\Gamma}=0$, yields the following estimate for $r\in[\frac{1}{\delta},\frac{2}{\delta}]$ and $i=1,2$:
\begin{equation}\label{002}
|(\zeta^{i}-\zeta^{i}_{0})(\rho,z,\phi)|
\leq\int_{0}^{\rho}|\partial_{\rho}(\zeta^{i}-\zeta^{i}_{0})(\tilde{\rho},z,\phi)|d\tilde{\rho}
=\rho^{2}O(r^{-2-\kappa}).
\end{equation}
It follows that
\begin{equation}
I_{5}\leq C\int_{0}^{\frac{\pi}{2}}\int_{\frac{1}{\delta}}^{\frac{2}{\delta}}
\frac{\cos\theta}{\rho^{3}\sin\theta}\left(\underbrace{|\nabla \zeta^{1}|^{2}}_{\rho^{2}O(r^{-4-2\kappa})\sin\theta}+
\underbrace{|\nabla \zeta^{1}_{0}|^{2}}_{\rho^{2}O(r^{-8})\sin^{4}\theta}
+\underbrace{(\zeta^{1}-\zeta^{1}_{0})^{2}}
_{\rho^{4}O(r^{-4-2\kappa})}\underbrace{|\nabla\overline{\varphi}_{\delta}|^{2}}_{ O(\delta^{4})}\right)r^{5}\sin(2\theta)dr d\theta,
\end{equation}
which converges to zero.

Lastly, consider the 6th integral. Use \eqref{001} and \eqref{002} to find
\begin{align}
\begin{split}
I_{6}\leq& C\int_{0}^{\frac{\pi}{2}}\int_{\frac{1}{\delta}}^{\frac{2}{\delta}}
\frac{\sin\theta}{\rho^{3}\cos\theta}\left(\underbrace{|\nabla \zeta^{2}|^{2}}_{ \rho^{2}O(r^{-4-2\kappa})\cos\theta}+
\underbrace{|\nabla \zeta^{2}_{0}|^{2}}_{\rho^{2}O(r^{-8})\cos^{4}\theta}
+\underbrace{(\zeta^{2}-\zeta^{2}_{0})^{2}}_{ \rho^{4}O(r^{-4-2\kappa})}\underbrace{|\nabla\overline{\varphi}_{\delta}|^{2}}_{ O(\delta^{4})}\right)r^{5}\sin(2\theta)dr d\theta\\
&+C\int_{0}^{\frac{\pi}{2}}\int_{\frac{1}{\delta}}^{\frac{2}{\delta}}
\frac{r^{-4-2\kappa}\cos\theta}{\rho^{2}\sin\theta}\left(\underbrace{|\nabla \zeta^{1}|^{2}}_{ \rho^{2}O(r^{-4-2\kappa})\sin\theta}+
\underbrace{|\nabla \zeta^{1}_{0}|^{2}}_{\rho^{2}O(r^{-8})\sin^{4}\theta}
+\underbrace{(\zeta^{1}-\zeta^{1}_{0})^{2}}_{ \rho^{4}O(r^{-4-2\kappa})}\underbrace{|\nabla\overline{\varphi}_{\delta}|^{2}}_{ O(\delta^{4})}\right)r^{5}\sin(2\theta)dr d\theta.
\end{split}
\end{align}
This clearly also converges to zero.
\end{proof}

Consider now small balls centered at the origin. Let
\begin{equation}\label{68}
F_{\delta}(\Psi)=
(U,V_{\delta},W_{\delta}
,\zeta^{1}_{\delta},\zeta^{2}_{\delta}),
\end{equation}
where
\begin{equation}
(V_{\delta},W_{\delta}
,\zeta^{1}_{\delta},\zeta^{2}_{\delta})=(V_0,W_0,\zeta^{1}_0,\zeta^{2}_0)
+\varphi_{\delta}(V-V_0,W-W_0,\zeta^{1}-\zeta^{1}_0,\zeta^{2}-\zeta^{2}_0),
\end{equation}
so that $F_{\delta}(\Psi)=\Psi_{0}$ up to the first component on $B_{\delta}$.

\begin{lemma}\label{cutandpaste2}
$\lim_{\delta\rightarrow 0}\mathcal{I}(F_{\delta}(\Psi))=\mathcal{I}(\Psi).$ This also holds if $\Psi\equiv\Psi_{0}$ outside of $B_{2/\delta}$.
\end{lemma}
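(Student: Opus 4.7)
The plan is to mirror the three-region argument of Lemma \ref{cutandpaste1}: decompose
\begin{equation*}
\mathcal{I}(F_\delta(\Psi)) = \mathcal{I}_{r \leq \delta}(F_\delta(\Psi)) + \mathcal{I}_{\delta < r < 2\delta}(F_\delta(\Psi)) + \mathcal{I}_{r \geq 2\delta}(F_\delta(\Psi)),
\end{equation*}
and show the first two pieces vanish while the third converges to $\mathcal{I}(\Psi)$. On the outer region $F_\delta(\Psi) \equiv \Psi$ by the definition of $\varphi_\delta$, so that contribution equals $\mathcal{I}_{r \geq 2\delta}(\Psi)$ and tends to $\mathcal{I}(\Psi)$ by the dominated convergence theorem, using the finiteness of $\mathcal{I}(\Psi)$ guaranteed by the hypothesized asymptotics.

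On the inner ball $B_\delta$ only the first component $U$ is unchanged, so $(V_\delta, W_\delta, \zeta^1_\delta, \zeta^2_\delta) = (V_0, W_0, \zeta^1_0, \zeta^2_0)$. The $|\nabla U|^2$ piece is $O(\delta^2)$ from $|\nabla U| = O(r^{-2})$ via \eqref{fall2.0} or \eqref{fall6.0}, and the $|\nabla V_0|^2$, $|\nabla W_0|^2$, $\sinh^2 W_0 |\nabla(V_0+h_2)|^2$ pieces each vanish by the $\Psi_0$ asymptotics \eqref{fall8.0}--\eqref{fall10.1}, noting that $|\nabla h_2|^2 = 4/(r^4 \sin^2(2\theta))$ is balanced by the decay of $\sinh^2 W_0$. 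The more delicate twist-potential terms carry the weight $e^{-6h_1 - 6U} = \rho^{-3} e^{-6U}$ combined with the given $U$ from $\Psi$: in either the asymptotically flat case ($e^{-6U} = O(r^{12})$) or the cylindrical case ($e^{-6U} = O(r^{6})$), the net exponent of $r$ in the integrand (after including the volume form $\sim r^5 \sin(2\theta)\, dr \, d\theta \, d\phi$ and the $|\nabla \zeta^i_0|^2$ decay from \eqref{fall10.0}) is strictly positive, yielding an $O(\delta^p)$ bound.

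The transition annulus $\delta < r < 2\delta$ is where the cut-and-paste introduces the factor $|\nabla\varphi_\delta| = O(\delta^{-2})$ through the product rule
\begin{equation*}
\nabla V_\delta = \varphi_\delta \nabla V + (1-\varphi_\delta)\nabla V_0 + (V-V_0)\nabla\varphi_\delta,
\end{equation*}
and similarly for $W_\delta$, $\zeta^1_\delta$, $\zeta^2_\delta$. The `genuine gradient' parts are dominated by those estimated in the inner and outer regions. For the dangerous $(V-V_0)\nabla\varphi_\delta$ and $(W-W_0)\nabla\varphi_\delta$ pieces one uses that $V, V_0, W, W_0$ are bounded as $r\to 0$; the resulting $O(\delta^{-4})$ from $|\nabla\varphi_\delta|^2$, integrated against the $O(\delta^6)$ volume of the annulus, yields $O(\delta^2) \to 0$. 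For the twist potentials one first derives a pointwise bound on $\zeta^i - \zeta^i_0$ by integrating $|\partial_\rho(\zeta^i - \zeta^i_0)|$ from the axis (using $\zeta^i|_\Gamma = \zeta^i_0|_\Gamma$ and the asymptotics \eqref{fall4.0} or \eqref{fall7.0}), obtaining a bound polynomial in $\rho$ and $r$, which, when combined with the $e^{-6h_1 - 6U}$ weight as in the inner-ball step, produces a net positive power of $r$ on the annulus.

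The main obstacle is tracking the twist-potential terms, whose gradients may blow up as $r^{-8+\kappa}$ (asymptotically flat) or $r^{-5+\kappa}$ (cylindrical) at the nondesignated end, a growth absorbed only by the large factor $e^{-6U}$ coming from the extremal $U \sim -2\log r$ or $-\log r$ behavior near $r=0$. Verifying that the exponents balance \emph{simultaneously} in both the asymptotically flat and cylindrical cases is the key computational point; once done, each of the six pieces of the integrand is controlled by some $\delta^{\epsilon}$ with $\epsilon>0$, and summing yields $\mathcal{I}(F_\delta(\Psi)) \to \mathcal{I}(\Psi)$. The final sentence of the lemma (the case $\Psi \equiv \Psi_0$ outside $B_{2/\delta}$) follows from the same argument, as the modification in the outer region does not affect any of the estimates near $r=0$.
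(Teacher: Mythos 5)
Your proposal follows essentially the same three-region decomposition ($r\leq\delta$, $\delta<r<2\delta$, $r\geq2\delta$) and the same sequence of estimates as the paper's proof, including the product-rule expansion, the $O(\delta^{-4})$ versus $O(\delta^6)$ balance on the annulus, and the pointwise bound on $\zeta^i-\zeta^i_0$ obtained by integrating from the axis. The only stylistic divergence is in the inner ball: you track the net power of $r$ explicitly in both the asymptotically flat and cylindrical cases, whereas the paper simply invokes $e^{-U}\leq Ce^{-U_0}$ near the origin and bounds the non-$U$ terms by the finite reduced energy of $\Psi_0$, which sidesteps the exponent bookkeeping; both routes are valid and reach the same conclusion.
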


\begin{proof}
Write
\begin{equation}\label{080}
\mathcal{I}(F_{\delta}(\Psi))
=\mathcal{I}_{r\leq\delta}(F_{\delta}(\Psi))
+\mathcal{I}_{\delta< r<2\delta}(F_{\delta}(\Psi))
+\mathcal{I}_{r\geq2\delta}(F_{\delta}(\Psi)),
\end{equation}
and observe that by the dominated convergence theorem
\begin{equation}\label{081}
\mathcal{I}_{r\geq2\delta}(F_{\delta}(\Psi))
=\mathcal{I}_{r\geq2\delta}(\Psi)
\rightarrow \mathcal{I}(\Psi).
\end{equation}
Moreover
\begin{align}
\begin{split}
\mathcal{I}_{r\leq\delta}(F_{\delta}(\Psi))=&\int_{r\leq\delta}12|\nabla U|^{2}
+|\nabla V_{0}|^{2}+|\nabla W_{0}|^{2}
+\sinh^{2}W_{0}|\nabla(V_{0}+h_{2})|^{2}+\frac{e^{-6h_{1}-6U-h_{2}-V_{0}}}{\cosh W_{0}}|\nabla \zeta_{0}^{1}|^{2}\\
&+\int_{r\leq\delta}
e^{-6h_{1}-6U+h_{2}+V_{0}}\cosh W_{0}
\left|\nabla \zeta_{0}^{2}-e^{-h_{2}-V_{0}}\tanh W_{0}\nabla \zeta_{0}^{1}\right|^{2},
\end{split}
\end{align}
where the first term on the right-hand side converges to zero again by the DCT. The remaining terms may be estimated
by the reduced energy of $\Psi_{0}$ (and hence also converge to zero), since
\begin{equation}\label{82.1}
e^{-U}\leq Ce^{-U_{0}}
\end{equation}
near the origin.

Now consider
\begin{align}\label{070}
\begin{split}
\mathcal{I}_{\delta< r<2\delta}(F_{\delta}(\Psi))
=&\underbrace{\int_{\delta< r<2\delta}
12|\nabla U|^{2}}_{I_{1}}
+\underbrace{\int_{\delta< r<2\delta}
|\nabla V_{\delta}|^{2}}_{I_{2}}
+\underbrace{\int_{\delta< r<2\delta}
|\nabla W_{\delta}|^{2}}_{I_{3}}\\
&+\underbrace{\int_{\delta< r<2\delta}
\sinh^{2}W_{\delta}|\nabla(V_{\delta}+h_{2})|^{2}}_{I_{4}}
+\underbrace{\int_{\delta< r<2\delta}
\frac{\cos\theta}{\rho^{3}\sin\theta}
\frac{e^{-V_{\delta}-6U}}{\cosh W_{\delta}}
|\nabla \zeta^{1}_{\delta}|^{2}}_{I_{5}}\\
&+\underbrace{\int_{\delta< r<2\delta}
\frac{\sin\theta}{\rho^{3}\cos\theta}
e^{V_{\delta}-6U}\cosh W_{\delta}
|\nabla \zeta^{2}_{\delta}-e^{-V_{\delta}}\cot\theta \tanh W_{\delta}
\nabla \zeta^{1}_{\delta}|^{2}}_{I_{6}}.
\end{split}
\end{align}
Notice that
\begin{equation}\label{0071}
I_{1}=24\pi\int_{0}^{\frac{\pi}{2}}\int_{\delta}^{2\delta}
\underbrace{|\nabla U|^{2}}_{O(r^{-4})}r^{5}\sin(2\theta)dr d\theta\rightarrow 0.
\end{equation}
Also
\begin{equation}\label{00072}
I_{2}\leq C\int_{0}^{\frac{\pi}{2}}\int_{\delta}^{2\delta}
\left(\underbrace{|\nabla V|^{2}}_{O(r^{-4})}+\underbrace{|\nabla V_{0}|^{2}}_{O(r^{-4})}
+\underbrace{(V-V_{0})^{2}}_{O(1)}\underbrace{|\nabla\varphi_{\delta}|^{2}}_{ O(\delta^{-4})}\right)r^{5}\sin(2\theta)dr d\theta\rightarrow 0,
\end{equation}
and
\begin{equation}
I_{3}\leq C\int_{0}^{\frac{\pi}{2}}\int_{\delta}^{2\delta}
\left(\underbrace{|\nabla W|^{2}}_{\rho^{-1}O(r^{-2})}+\underbrace{|\nabla W_{0}|^{2}}_{\rho^{2}O(r^{-4})}
+\underbrace{(W-W_{0})^{2}}_{O(1)}\underbrace{|\nabla\varphi_{\delta}|^{2}}_{ O(\delta^{-4})}\right)r^{5}\sin(2\theta)dr d\theta\rightarrow 0.
\end{equation}

Now consider $I_{4}$. Since
\begin{equation}\label{0001}
\sinh W_{\delta}=\sqrt{\rho}O(r^{-1}),
\end{equation}
we have
\begin{equation}
I_{4}\leq C\int_{0}^{\frac{\pi}{2}}\int_{\delta}^{2\delta}
\rho r^{-2}\left(\underbrace{|\nabla V|^{2}}_{O(r^{-4})}+
\underbrace{|\nabla V_{0}|^{2}}_{O(r^{-4})}
+\underbrace{|\nabla h_{2}|^{2}}_{O(\rho^{-2})}
+\underbrace{(V-V_{0})^{2}}_{O(1)}\underbrace{|\nabla\varphi_{\delta}|^{2}}_{ O(\delta^{-4})}\right)r^{5}\sin(2\theta)dr d\theta\rightarrow 0.
\end{equation}

In order to estimate the 5th integral, note that \eqref{fall3} and \eqref{fall6} combined with the fact that $(\zeta^{1}-\zeta_{0}^{1})|_{\Gamma}=0$, yields the following estimate for $r\in[\delta,2\delta]$ and $i=1,2$:
\begin{equation}\label{0002}
|(\zeta^{i}-\zeta^{i}_{0})(\rho,z,\phi)|
\leq\int_{0}^{\rho}|\partial_{\rho}(\zeta^{i}-\zeta^{i}_{0})(\tilde{\rho},z,\phi)|d\tilde{\rho}
=\begin{cases}
\rho^{2}O(r^{-8+\kappa})\text{ }\text{ in the AF case,}\\
\rho^{2}O(r^{-5+\kappa})\text{ }\text{ in the AC case}.
\end{cases}
\end{equation}
It follows that in the asymptotically flat case
\begin{equation}
I_{5}\leq C\int_{0}^{\frac{\pi}{2}}\int_{\delta}^{2\delta}
\frac{r^{12}\cos\theta}{\rho^{3}\sin\theta}\left(\underbrace{|\nabla \zeta^{1}|^{2}}_{ \rho^{2}O(r^{-16+2\kappa})\sin\theta}+
\underbrace{|\nabla \zeta^{1}_{0}|^{2}}_{\rho^{2}O(r^{-8})\sin^{4}\theta}
+\underbrace{(\zeta^{1}-\zeta^{1}_{0})^{2}}_{ \rho^{4}O(r^{-16+2\kappa})}\underbrace{|\nabla\varphi_{\delta}|^{2}}_{ O(\delta^{-4})}\right)r^{5}\sin(2\theta)dr d\theta,
\end{equation}
and in the asymptotically cylindrical case
\begin{equation}
I_{5}\leq C\int_{0}^{\frac{\pi}{2}}\int_{\delta}^{2\delta}
\frac{r^{6}\cos\theta}{\rho^{3}\sin\theta}\left(\underbrace{|\nabla \zeta^{1}|^{2}}_{ \rho^{2}O(r^{-10+2\kappa})\sin\theta}+
\underbrace{|\nabla \zeta^{1}_{0}|^{2}}_{\rho^{2}O(r^{-8})\sin^{4}\theta}
+\underbrace{(\zeta^{1}-\zeta^{1}_{0})^{2}}_{ \rho^{4}O(r^{-10+2\kappa})}\underbrace{|\nabla\varphi_{\delta}|^{2}}_{ O(\delta^{-4})}\right)r^{5}\sin(2\theta)dr d\theta.
\end{equation}
These both converge to zero.

Lastly consider the 6th integral. In the asymptotically flat case we have
\begin{align}
\begin{split}
I_{6}\leq& C\int_{0}^{\frac{\pi}{2}}\int_{\delta}^{2\delta}
\frac{r^{12}\sin\theta}{\rho^{3}\cos\theta}\left(\underbrace{|\nabla \zeta^{2}|^{2}}_{ \rho^{2}O(r^{-16+2\kappa})\cos\theta}+
\underbrace{|\nabla \zeta^{2}_{0}|^{2}}_{\rho^{2}O(r^{-8})\cos^{4}\theta}
+\underbrace{(\zeta^{2}-\zeta^{2}_{0})^{2}}_{ \rho^{4}O(r^{-16+2\kappa})}\underbrace{|\nabla\varphi_{\delta}|^{2}}_{ O(\delta^{-4})}\right)r^{5}\sin(2\theta)dr d\theta\\
&+C\int_{0}^{\frac{\pi}{2}}\int_{\delta}^{2\delta}
\frac{r^{10}\cos\theta}{\rho^{2}\sin\theta}\left(\underbrace{|\nabla \zeta^{1}|^{2}}_{ \rho^{2}O(r^{-16+2\kappa})\sin\theta}+
\underbrace{|\nabla \zeta^{1}_{0}|^{2}}_{\rho^{2}O(r^{-8})\sin^{4}\theta}
+\underbrace{(\zeta^{1}-\zeta^{1}_{0})^{2}}_{ \rho^{4}O(r^{-16+2\kappa})}\underbrace{|\nabla\varphi_{\delta}|^{2}}_{ O(\delta^{-4})}\right)r^{5}\sin(2\theta)dr d\theta,
\end{split}
\end{align}
and in the asymptotically cylindrical case we have
\begin{align}
\begin{split}
I_{6}\leq& C\int_{0}^{\frac{\pi}{2}}\int_{\delta}^{2\delta}
\frac{r^{6}\sin\theta}{\rho^{3}\cos\theta}\left(\underbrace{|\nabla \zeta^{2}|^{2}}_{ \rho^{2}O(r^{-10+2\kappa})\cos\theta}+
\underbrace{|\nabla \zeta^{2}_{0}|^{2}}_{\rho^{2}O(r^{-8})\cos^{4}\theta}
+\underbrace{(\zeta^{2}-\zeta^{2}_{0})^{2}}_{ \rho^{4}O(r^{-10+2\kappa})}\underbrace{|\nabla\varphi_{\delta}|^{2}}_{ O(\delta^{-4})}\right)r^{5}\sin(2\theta)dr d\theta\\
&+C\int_{0}^{\frac{\pi}{2}}\int_{\delta}^{2\delta}
\frac{r^{4}\cos\theta}{\rho^{2}\sin\theta}\left(\underbrace{|\nabla \zeta^{1}|^{2}}_{ \rho^{2}O(r^{-10+2\kappa})\sin\theta}+
\underbrace{|\nabla \zeta^{1}_{0}|^{2}}_{\rho^{2}O(r^{-8})\sin^{4}\theta}
+\underbrace{(\zeta^{1}-\zeta^{1}_{0})^{2}}_{ \rho^{4}O(r^{-10+2\kappa})}\underbrace{|\nabla\varphi_{\delta}|^{2}}_{ O(\delta^{-4})}\right)r^{5}\sin(2\theta)dr d\theta.
\end{split}
\end{align}
Again both of these converge to zero.
\end{proof}

Consider now cylindrical regions around the axis $\Gamma$ and away from the origin given by
\begin{equation}\label{91}
\mathcal{C}_{\delta,\varepsilon}=
\{\rho\leq\varepsilon\}\cap
\{\delta\leq r\leq 2/\delta\},
\end{equation}
and
\begin{equation}\label{91.1}
\mathcal{W}_{\delta,\varepsilon}=
\{\varepsilon\leq \rho\leq\sqrt{\varepsilon}\}\cap
\{\delta\leq r\leq 2/\delta\}.
\end{equation}
Let
\begin{equation}\label{92}
G_{\varepsilon}(\Psi)=(U,V_{\varepsilon},
W_{\varepsilon},\zeta^{1}_{\varepsilon},\zeta^{2}_{\varepsilon})
\end{equation}
where
\begin{equation}\label{93}
(V_{\varepsilon},
W_{\varepsilon},\zeta^{1}_{\varepsilon},\zeta^{2}_{\varepsilon})=
(V_{0},W_{0},\zeta^{1}_{0},\zeta^{2}_{0})
+\phi_{\varepsilon}(V-V_{0},W-W_{0},\zeta^{1}-\zeta^{1}_{0},\zeta^{2}-\zeta^{2}_{0}),
\end{equation}
so that $G_{\varepsilon}(\Psi)=\Psi_{0}$ up to the first component on $\rho\leq\varepsilon$.

\begin{lemma}\label{cutandpaste3}
Fix $\delta>0$ and suppose that $\Psi\equiv\Psi_{0}$ up to the first component on $B_{\delta}$, then
$\lim_{\varepsilon\rightarrow 0}\mathcal{I}(G_{\varepsilon}(\Psi))=\mathcal{I}(\Psi)$. This also holds if
$\Psi\equiv\Psi_{0}$ outside $B_{2/\delta}$.
\end{lemma}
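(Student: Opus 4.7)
My plan is to follow the same three-region template already used for Lemmas \ref{cutandpaste1} and \ref{cutandpaste2}, decomposing
\begin{equation*}
\mathcal{I}(G_{\varepsilon}(\Psi))=\mathcal{I}_{\rho\leq\varepsilon}(G_{\varepsilon}(\Psi))+\mathcal{I}_{\mathcal{W}_{\delta,\varepsilon}}(G_{\varepsilon}(\Psi))+\mathcal{I}_{\rho\geq\sqrt{\varepsilon}}(G_{\varepsilon}(\Psi)),
\end{equation*}
and showing that the first two tend to zero while the third tends to $\mathcal{I}(\Psi)$. On the outer piece we have $G_{\varepsilon}(\Psi)=\Psi$ identically, so the dominated convergence theorem yields $\mathcal{I}_{\rho\geq\sqrt{\varepsilon}}(G_{\varepsilon}(\Psi))\to\mathcal{I}(\Psi)$ using finiteness of $\mathcal{I}(\Psi)$ (which follows from the asymptotics \eqref{fall1}--\eqref{fall4.1}).

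For the inner piece $\{\rho\leq\varepsilon\}$, only the first component of $G_{\varepsilon}(\Psi)$ differs from $\Psi_{0}$. By the hypothesis $\Psi\equiv\Psi_{0}$ on $B_{\delta}$ (or, in the alternative case, on the complement of $B_{2/\delta}$), the only potentially non-trivial contribution comes from $\mathcal{C}_{\delta,\varepsilon}$. In this annular cylinder the near-axis asymptotics \eqref{fall1.1}--\eqref{fall4.1} and \eqref{fall8.1}--\eqref{fall10.1} bound $U=O(1)$, $e^{-6U}=O(1)$, and ensure that each term in the integrand is dominated by an $L^{1}$ function with respect to $dx=\rho\,d\rho\,dz\,d\phi$; for example $|\nabla\zeta^{1}_{0}|^{2}/\rho^{3}$ is compensated to $O(\rho^{-1})$ by the trigonometric factor $\sin^{4}\theta$, which together with the volume factor $\rho$ gives an integrable bound. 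DCT then forces $\mathcal{I}_{\mathcal{C}_{\delta,\varepsilon}}(G_{\varepsilon}(\Psi))\to 0$.

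The main work is in the transition region $\mathcal{W}_{\delta,\varepsilon}$, where $G_{\varepsilon}(\Psi)$ is a non-trivial convex combination of $\Psi$ and $\Psi_{0}$. The crucial quantitative input is
\begin{equation*}
|\nabla\phi_{\varepsilon}|^{2}\leq \frac{C}{\rho^{2}|\log\varepsilon|^{2}}\qquad\text{on }\mathcal{W}_{\delta,\varepsilon}.
\end{equation*}
I would then expand each of the six terms analogous to $I_{1},\ldots,I_{6}$ in the proofs of the previous two lemmas, using Young's inequality to isolate cross terms of the form $(V-V_{0})^{2}|\nabla\phi_{\varepsilon}|^{2}$, and likewise for $W,\zeta^{1},\zeta^{2}$. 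The non-cross terms are handled by DCT as above. For the cross terms the key pointwise estimates are: $(V-V_{0})^{2}=O(1)$; since $W,W_{0}$ both vanish on $\Gamma$, $(W-W_{0})^{2}=O(\rho)$; and since $\zeta^{i}|_{\Gamma}=\zeta^{i}_{0}|_{\Gamma}$, integration in $\rho$ yields $(\zeta^{i}-\zeta^{i}_{0})^{2}=O(\rho^{4})$ with the trigonometric factors inherited from \eqref{fall4.1} and \eqref{fall10.1}. Substituting these into the expanded $I_{1},\ldots,I_{6}$ produces a typical bound
\begin{equation*}
\int_{\mathcal{W}_{\delta,\varepsilon}}(\,\cdot\,)^{2}|\nabla\phi_{\varepsilon}|^{2}\,dx\leq \frac{C}{|\log\varepsilon|^{2}}\int_{\varepsilon}^{\sqrt{\varepsilon}}\frac{d\rho}{\rho}\leq \frac{C}{|\log\varepsilon|}\to 0.
\end{equation*}

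The main obstacle I anticipate is handling the $\sinh^{2}W_{\varepsilon}|\nabla h_{2}|^{2}$ piece of $I_{4}$, since $|\nabla h_{2}|^{2}=O(\rho^{-2})$ is singular on the axis; however, $\sinh W_{\varepsilon}=O(\sqrt{\rho})$ from the axis asymptotics exactly compensates, making the integrand $O(\rho^{-1})$ and hence $L^{1}$ against $dx$. A secondary subtlety is the $1/\rho^{3}$ weight in $I_{5}$ and $I_{6}$; here the trigonometric decay of $|\nabla\zeta^{i}|,|\nabla\zeta^{i}_{0}|$ from \eqref{fall4.1} and \eqref{fall10.1}, combined with the $O(\rho^{4})$ bound on $(\zeta^{i}-\zeta^{i}_{0})^{2}$, ensures every cross term is controlled. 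The alternative case $\Psi\equiv\Psi_{0}$ outside $B_{2/\delta}$ is analogous with the roles of $B_{\delta}$ and the complement swapped; only the axis asymptotics, not those at infinity, are needed since the support of $\Psi-\Psi_{0}$ is already compact.
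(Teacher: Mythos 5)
Your proposal follows essentially the same three-region strategy as the paper: isolate a near-axis core, a logarithmic transition collar, and an outer region on which $G_{\varepsilon}(\Psi)=\Psi$; use DCT on the first and third; and estimate the transition collar term-by-term using the near-axis asymptotics together with the key bound $|\nabla\phi_{\varepsilon}|^{2}=O\bigl((\rho\log\varepsilon)^{-2}\bigr)$ and the $O(\rho^{2})$ estimate for $\zeta^{i}-\zeta^{i}_{0}$ obtained by integrating from the axis. One small imprecision: your three sets $\{\rho\leq\varepsilon\}$, $\mathcal{W}_{\delta,\varepsilon}$, $\{\rho\geq\sqrt{\varepsilon}\}$ do not cover $\mathbb{R}^{3}$, since $\mathcal{W}_{\delta,\varepsilon}$ is restricted to $\{\delta\leq r\leq 2/\delta\}$; the paper instead takes $\mathcal{C}_{\delta,\varepsilon}$, $\mathcal{W}_{\delta,\varepsilon}$, and their complement so that the decomposition is exhaustive, which also makes the use of the hypothesis $\Psi\equiv\Psi_{0}$ (up to the first component) on $B_{\delta}$ or outside $B_{2/\delta}$ transparent in the DCT step. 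This is easily patched in your write-up.
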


\begin{proof}
Write
\begin{equation}\label{94}
\mathcal{I}(G_{\varepsilon}(\Psi))
=\mathcal{I}_{\mathcal{C}_{\delta,\varepsilon}}(G_{\varepsilon}(\Psi))
+\mathcal{I}_{\mathcal{W}_{\delta,\varepsilon}}(G_{\varepsilon}(\Psi))
+\mathcal{I}_{\mathbb{R}^{3}\setminus(\mathcal{C}_{\delta,\varepsilon}
\cup\mathcal{W}_{\delta,\varepsilon})}(G_{\varepsilon}(\Psi)).
\end{equation}
Since
$\Psi\equiv\Psi_{0}$ up to the first component on $B_{\delta}$, the DCT and finite energy of $\Psi_{0}$ imply that
\begin{equation}\label{95}
\mathcal{I}_{\mathbb{R}^{3}\setminus(\mathcal{C}_{\delta,\varepsilon}
\cup\mathcal{W}_{\delta,\varepsilon})}(G_{\varepsilon}(\Psi))
\rightarrow \mathcal{I}(\Psi).
\end{equation}
Moreover
\begin{align}
\begin{split}
\mathcal{I}_{\mathcal{C}_{\delta,\varepsilon}}(G_{\varepsilon}(\Psi))=&
\int_{\mathcal{C}_{\delta,\varepsilon}}12|\nabla U|^{2}
+|\nabla V_{0}|^{2}+|\nabla W_{0}|^{2}
+\sinh^{2}W_{0}|\nabla(V_{0}+h_{2})|^{2}+\frac{e^{-6h_{1}-6U-h_{2}-V_{0}}}{\cosh W_{0}}|\nabla \zeta_{0}^{1}|^{2}
\\
&+\int_{\mathcal{C}_{\delta,\varepsilon}}
e^{-6h_{1}-6U+h_{2}+V_{0}}\cosh W_{0}
\left|\nabla \zeta_{0}^{2}-e^{-h_{2}-V_{0}}\tanh W_{0}\nabla \zeta_{0}^{1}\right|^{2},
\end{split}
\end{align}
where the first term on the right-hand side converges to zero again by the DCT. The remaining terms may be estimated
by the reduced energy of $\Psi_{0}$ (and hence also converge to zero), since
\begin{equation}
e^{-U}\leq C e^{-U_{0}}.
\end{equation}

Now observe that
\begin{align}
\begin{split}
\mathcal{I}_{\mathcal{W}_{\delta,\varepsilon}}(G_{\varepsilon}(\Psi))
=&\underbrace{\int_{\mathcal{W}_{\delta,\varepsilon}}
12|\nabla U|^{2}}_{I_{1}}
+\underbrace{\int_{\mathcal{W}_{\delta,\varepsilon}}
|\nabla V_{\varepsilon}|^{2}}_{I_{2}}
+\underbrace{\int_{\mathcal{W}_{\delta,\varepsilon}}
|\nabla W_{\varepsilon}|^{2}}_{I_{3}}\\
&+\underbrace{\int_{\mathcal{W}_{\delta,\varepsilon}}
\sinh^{2}W_{\varepsilon}|\nabla(V_{\varepsilon}+h_{2})|^{2}}_{I_{4}}
+\underbrace{\int_{\mathcal{W}_{\delta,\varepsilon}}
\frac{\cos\theta}{\rho^{3}\sin\theta}
\frac{e^{-V_{\varepsilon}-6U}}{\cosh W_{\varepsilon}}
|\nabla \zeta^{1}_{\delta}|^{2}}_{I_{5}}\\
&+\underbrace{\int_{\mathcal{W}_{\delta,\varepsilon}}
\frac{\sin\theta}{\rho^{3}\cos\theta}
e^{V_{\varepsilon}-6U}\cosh W_{\varepsilon}
|\nabla \zeta^{2}_{\varepsilon}-e^{-V_{\varepsilon}}\cot\theta \tanh W_{\varepsilon}
\nabla \zeta^{1}_{\varepsilon}|^{2}}_{I_{6}}.
\end{split}
\end{align}
We have
\begin{equation}
I_{1}\leq C\int_{\delta/2}^{3/\delta}\int_{\varepsilon}^{\sqrt{\varepsilon}}
\underbrace{|\nabla U|^{2}}_{O(1)}\rho d\rho d|z|\rightarrow 0,
\end{equation}
\begin{equation}
I_{2}\leq C\int_{\delta/2}^{3/\delta}\int_{\varepsilon}^{\sqrt{\varepsilon}}
\left(\underbrace{|\nabla V|^{2}}_{O(1)}+\underbrace{|\nabla V_{0}|^{2}}_{O(1)}
+\underbrace{(V-V_{0})^{2}}_{O(1)}\underbrace{|\nabla\phi_{\varepsilon}|^{2}}_{ O\left((\rho\log\varepsilon)^{-2}\right)}\right)\rho d\rho d|z|\rightarrow 0,
\end{equation}
and
\begin{equation}
I_{3}\leq C\int_{\delta/2}^{3/\delta}\int_{\varepsilon}^{\sqrt{\varepsilon}}
\left(\underbrace{|\nabla W|^{2}}_{O(\rho^{-1})}+\underbrace{|\nabla W_{0}|^{2}}_{O(1)}
+\underbrace{(W-W_{0})^{2}}_{O(\rho)}\underbrace{|\nabla\phi_{\varepsilon}|^{2}}_{ O\left((\rho\log\varepsilon)^{-2}\right)}\right)\rho d\rho d|z|\rightarrow 0,
\end{equation}

Now consider $I_{4}$. Since
\begin{equation}
\sinh W_{\varepsilon}=O(\rho^{\frac{1}{2}}),
\end{equation}
we find that
\begin{equation}
I_{4}\leq C\int_{\delta/2}^{3/\delta}\int_{\varepsilon}^{\sqrt{\varepsilon}}
\rho\left(\underbrace{|\nabla V|^{2}}_{O(1)}+
\underbrace{|\nabla V_{0}|^{2}}_{O(1)}
+\underbrace{|\nabla h_{2}|^{2}}_{O(\rho^{-2})}
+\underbrace{(V-V_{0})^{2}}_{O(1)}\underbrace{|\nabla\phi_{\varepsilon}|^{2}}_{ O\left((\rho\log\varepsilon)^{-2}\right)}\right)\rho d\rho d|z|\rightarrow 0.
\end{equation}

In order to estimate the 5th integral, note that the following estimate holds near the axis and away from the origin and for $i=1,2$:
\begin{equation}
|(\zeta^{i}-\zeta^{i}_{0})(\rho,z,\phi)|
\leq\int_{0}^{\rho}|\partial_{\rho}(\zeta^{i}-\zeta^{i}_{0})(\tilde{\rho},z,\phi)|d\tilde{\rho}
=O(\rho^{2}).
\end{equation}
It follows that
\begin{equation}
I_{5}\leq C\int_{\delta/2}^{3/\delta}\int_{\varepsilon}^{\sqrt{\varepsilon}}
\frac{\cos\theta}{\rho^{3}\sin\theta}\left(\underbrace{|\nabla \zeta^{1}|^{2}}_{ O(\rho^{2})\sin\theta}+
\underbrace{|\nabla \zeta^{1}_{0}|^{2}}_{O(\rho^{2})\sin^{4}\theta}
+\underbrace{(\zeta^{1}-\zeta^{1}_{0})^{2}}_{ O(\rho^{4})}\underbrace{|\nabla\phi_{\varepsilon}|^{2}}_{ O\left((\rho\log\varepsilon)^{-2}\right)}\right)\rho d\rho d|z|\rightarrow 0.
\end{equation}

Lastly consider the 6th integral. We have
\begin{align}
\begin{split}
I_{6}\leq& C\int_{\delta/2}^{3/\delta}\int_{\varepsilon}^{\sqrt{\varepsilon}}
\frac{\sin\theta}{\rho^{3}\cos\theta}\left(\underbrace{|\nabla \zeta^{2}|^{2}}_{ O(\rho^{2})\cos\theta}+
\underbrace{|\nabla \zeta^{2}_{0}|^{2}}_{O(\rho^{2})\cos^{4}\theta}
+\underbrace{(\zeta^{2}-\zeta^{2}_{0})^{2}}_{ O(\rho^{4})}\underbrace{|\nabla\phi_{\varepsilon}|^{2}}_{ O\left((\rho\log\varepsilon)^{-2}\right)}\right)\rho d\rho d|z|\\
&+C\int_{\delta/2}^{3/\delta}\int_{\varepsilon}^{\sqrt{\varepsilon}}
\frac{\cos\theta}{\rho^{2}\sin\theta}\left(\underbrace{|\nabla \zeta^{1}|^{2}}_{ O(\rho^{2})\sin\theta}+
\underbrace{|\nabla \zeta^{1}_{0}|^{2}}_{O(\rho^{2})\sin^{4}\theta}
+\underbrace{(\zeta^{1}-\zeta^{1}_{0})^{2}}_{ O(\rho^{4})}\underbrace{|\nabla\phi_{\varepsilon}|^{2}}_{ O\left((\rho\log\varepsilon)^{-2}\right)}\right)\rho d\rho d|z|,
\end{split}
\end{align}
which converges to zero.
\end{proof}

By composing the three cut and paste operations defined above, we obtain the desired replacement for $\Psi$ which satisfies \eqref{54}. Namely, let
\begin{equation}\label{105}
\Psi_{\delta,\varepsilon}
=G_{\varepsilon}\left(F_{\delta}\left(
\overline{F}_{\delta}(\Psi)\right)\right).
\end{equation}

\begin{prop}\label{proposition}
Let $\varepsilon\ll\delta\ll 1$ and suppose that $\Psi$ satisfies the hypotheses of Theorem \ref{infimum}. Then
$\Psi_{\delta,\varepsilon}$ satisfies \eqref{54} and
\begin{equation}\label{106}
\lim_{\delta\rightarrow 0}\lim_{\varepsilon\rightarrow 0}
\mathcal{I}(\Psi_{\delta,\varepsilon})=\mathcal{I}(\Psi).
\end{equation}
\end{prop}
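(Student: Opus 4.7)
The plan is to verify the support conditions by unwinding the definitions, and then to iterate the three cut-and-paste lemmas to compute the double limit.

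For \eqref{54}, note that $F_\delta$ and $G_\varepsilon$ both preserve the first component of their input, so the $U$-component of $\Psi_{\delta,\varepsilon}$ is precisely $U_0 + \overline{\varphi}_\delta (U - U_0)$, whose difference with $U_0$ is supported in $B_{2/\delta}$. For the other four components, composing the definitions gives differences of the form $\phi_\varepsilon \varphi_\delta \overline{\varphi}_\delta (V - V_0)$ (and analogously for $W, \zeta^1, \zeta^2$). Because $\varphi_\delta \equiv 0$ for $r \leq \delta$, $\overline{\varphi}_\delta \equiv 0$ for $r \geq 2/\delta$, and $\phi_\varepsilon \equiv 0$ for $\rho \leq \varepsilon$, the product is supported in $\Omega_{\delta,\varepsilon}$, which gives the second half of \eqref{54}.

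For the limit I would first fix $\delta > 0$ and let $\varepsilon \to 0$. The intermediate map $F_\delta(\overline{F}_\delta(\Psi))$ equals $\Psi_0$ up to its first component on $B_\delta$ (since $\varphi_\delta \equiv 0$ there), and equals $\Psi_0$ on $\{r \geq 2/\delta\}$ (since $\overline{\varphi}_\delta \equiv 0$ there and $\varphi_\delta \equiv 1$ there, so $F_\delta$ acts as the identity). Both hypotheses of Lemma \ref{cutandpaste3} are therefore satisfied, and one obtains
\[
\lim_{\varepsilon \to 0} \mathcal{I}(\Psi_{\delta,\varepsilon}) = \mathcal{I}\bigl(F_\delta(\overline{F}_\delta(\Psi))\bigr).
\]
To then send $\delta \to 0$, I would use Lemma \ref{cutandpaste1} to obtain $\mathcal{I}(\overline{F}_\delta(\Psi)) \to \mathcal{I}(\Psi)$, and the second version of Lemma \ref{cutandpaste2} (applied with the $\delta$-dependent input $\overline{F}_\delta(\Psi)$, which agrees with $\Psi_0$ outside $B_{2/\delta}$) to conclude that $\mathcal{I}(F_\delta(\overline{F}_\delta(\Psi))) - \mathcal{I}(\overline{F}_\delta(\Psi)) \to 0$. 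Chaining the two estimates yields \eqref{106}.

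The main point I would want to justify carefully is the applicability of Lemma \ref{cutandpaste2} to a $\delta$-dependent input. The essential observation is that the modification regions of $\overline{F}_\delta$ and $F_\delta$ are disjoint for all sufficiently small $\delta$: the cutoff $\overline{\varphi}_\delta$ is identically $1$ on $\{r \leq 1/\delta\}$, which contains the ball $B_{2\delta}$ where $F_\delta$ acts. Hence $\overline{F}_\delta(\Psi) = \Psi$ throughout $B_{2\delta}$, and all the pointwise asymptotic bounds used in the estimates in the proof of Lemma \ref{cutandpaste2} remain valid with $\overline{F}_\delta(\Psi)$ in place of $\Psi$. With this uniformity in hand, the rest is a routine chain of convergence arguments.
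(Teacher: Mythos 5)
Your proof is correct, and it follows essentially the same route the paper intends: Proposition \ref{proposition} is meant to be a direct consequence of chaining Lemmas \ref{cutandpaste1}--\ref{cutandpaste3}, and you have supplied exactly the missing verification — that the hypotheses of each lemma are met at the stage where it is invoked, and that the $\delta$-dependence of the intermediate map $\overline{F}_\delta(\Psi)$ does no harm. Your key observation that $\overline{F}_\delta$ and $F_\delta$ modify disjoint regions (so that $\overline{F}_\delta(\Psi)=\Psi$ on $B_{2\delta}$, and consequently $\mathcal{I}(F_\delta(\overline{F}_\delta(\Psi)))-\mathcal{I}(\overline{F}_\delta(\Psi))$ coincides exactly with $\mathcal{I}(F_\delta(\Psi))-\mathcal{I}(\Psi)$) is precisely the point that makes the chaining legitimate, and it is what the ``This also holds if\ldots'' clauses in Lemmas \ref{cutandpaste2} and \ref{cutandpaste3} are silently relying on.
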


We are now in a position to establish the main result of this section.\medskip

\noindent\textit{Proof of Theorem \ref{infimum}.} According to Proposition \ref{proposition}, $\Psi_{\delta,\varepsilon}$ satisfies
\eqref{54}. It follows that if $\tilde{\Psi}^{t}_{\delta,\varepsilon}$ is the geodesic connecting $\tilde{\Psi}_{0}$ to
$\tilde{\Psi}_{\delta,\varepsilon}$ as described at the beginning of this section, then
$U^{t}_{\delta,\varepsilon}=U_{0}+t(U_{\delta,\varepsilon}-U_{0})$ and $V^{t}_{\delta,\varepsilon}=V_{0}$ on $\mathcal{A}_{\delta,\varepsilon}$. Now write
\begin{equation}\label{107}
\frac{d^{2}}{dt^{2}}\mathcal{I}(\Psi^{t}_{\delta,\varepsilon})
=\underbrace{\frac{d^{2}}{dt^{2}}\mathcal{I}_{\Omega_{\delta,\varepsilon}}
(\Psi^{t}_{\delta,\varepsilon})}_{I_{1}}+
\underbrace{\frac{d^{2}}{dt^{2}}\mathcal{I}_{\mathcal{A}_{\delta,\varepsilon}}
(\Psi^{t}_{\delta,\varepsilon})}_{I_{2}},
\end{equation}
and observe that
\begin{align}\label{108}
\begin{split}
I_{1}=&\frac{d^{2}}{dt^{2}}E_{\Omega_{\delta,\varepsilon}}
(\tilde{\Psi}^{t}_{\delta,\varepsilon})
-\frac{d^{2}}{dt^{2}}\int_{\partial\Omega_{\delta,\varepsilon}
\cap\partial\mathcal{A}_{\delta,\varepsilon}}
12\left[h_{1}+2(U_{0}
+t(U_{\delta,\varepsilon}-U_{0}))\right]\partial_{\nu}h_{1}\\
&-\frac{d^{2}}{dt^{2}}\int_{\partial\Omega_{\delta,\varepsilon}
\cap\partial\mathcal{A}_{\delta,\varepsilon}}(h_{2}+V_{0})\partial_{\nu}h_{2}\\
\geq & 2\int_{\Omega_{\delta,\varepsilon}}
|\nabla\operatorname{dist}_{SL(3,\mathbb{R})/SO(3)}
(\Psi_{\delta,\varepsilon},\Psi_{0})|^{2}
\end{split}
\end{align}
where convexity of the harmonic energy was used in the last step, and
\begin{align}\label{109}
\begin{split}
I_{2}=&\int_{\mathcal{A}_{\delta,\varepsilon}}
24|\nabla(U_{\delta,\varepsilon}-U_{0})|^{2}
+36(U_{\delta,\varepsilon}-U_{0})^{2}
\frac{e^{-6h_{1}-6U_{\delta,\varepsilon}^{t}-h_{2}-V_{0}}}{\cosh W_{0}}|\nabla\zeta^{1}_{0}|^{2}\\
&+\int_{\mathcal{A}_{\delta,\varepsilon}}36(U_{\delta,\varepsilon}-U_{0})^{2}
e^{-6h_{1}-6U_{\delta,\varepsilon}^{t}+h_{2}+V_{0}}\cosh W_{0}|e^{-h_{2}-V_{0}}
\tanh W_{0}\nabla\zeta^{1}_{0}-\nabla\zeta_{0}^{2}|^{2}\\
\geq &2\int_{\mathcal{A}_{\delta,\varepsilon}}
|\nabla\operatorname{dist}_{SL(3,\mathbb{R})/SO(3)}
(\Psi_{\delta,\varepsilon},\Psi_{0})|^{2}
\end{split}
\end{align}
since $\operatorname{dist}_{SL(3,\mathbb{R})/SO(3)}(\Psi_{\delta,\varepsilon},\Psi_{0})
=\sqrt{12}|U_{\delta,\varepsilon}-U_{0}|$ on $\mathcal{A}_{\delta,\varepsilon}$ as the geodesic is parametrized on the interval $[0,1]$.

It remains to show that passing $\frac{d^{2}}{dt^{2}}$ into the integral in \eqref{109} is valid. For this it is
sufficient to show that each term on the right-hand side of the equality in \eqref{109} is uniformly integrable. There
is no issue with the first term since $U_{\delta,\varepsilon},U_{0}\in H^{1}(\mathbb{R}^{3})$.  Consider now the second
and third terms, and write $\mathcal{A}_{\delta,\varepsilon}=\mathcal{C}_{\delta,\varepsilon}
\cup B_{\delta}$. Uniform integrability will follow if
$(U_{\delta,\varepsilon}-U_{0})^{2}e^{-6t(U_{\delta,\varepsilon}-U_{0})}$ is uniformly bounded, since then these
terms may be estimated by the reduced energy of $\Psi_{0}$. This is clearly the case on
$\mathcal{C}_{\delta,\varepsilon}$, as $U$ and $U_{0}$ are bounded on this region. On $B_{\delta}$,
$U_{\delta,\varepsilon}-U_{0}\sim -\log r$ in the asymptotically flat case and
$U_{\delta,\varepsilon}-U_{0}\sim 1$ in the an asymptotically cylindrical case. Therefore, the desired
conclusion follows if $r^{6t}(\log r)^{2}$ is uniformly bounded, which occurs for $0<t_{0}<t\leq 1$. Since
$t_{0}>0$ is arbitrary, we conclude that \eqref{55} holds for $\Psi_{\delta,\varepsilon}$ when $t\in(0,1]$.

We will now use the Euler-Lagrange equations for $\Psi_{0}$ (Appendix B) to verify \eqref{56} for $\Psi_{\delta,\varepsilon}$. Choose $\varepsilon_{0}<\varepsilon$,
$\delta_{0}<\delta$ and write
\begin{equation}\label{110}
\frac{d}{dt}\mathcal{I}(\Psi^{t}_{\delta,\varepsilon})
=\underbrace{\frac{d}{dt}\mathcal{I}_{\Omega_{\delta_{0},\varepsilon_{0}}}(\Psi^{t}_{\delta,\varepsilon})}_{I_{3}}+
\underbrace{\frac{d}{dt}\mathcal{I}_{\mathcal{A}_{\delta_{0},\varepsilon_{0}}}(\Psi^{t}_{\delta,\varepsilon})}_{I_{4}}.
\end{equation}
Observe that the justification for passing $\frac{d}{dt}$ into the integrals, for $t\in(0,1]$, is similar to the arguments of the previous paragraph.
Then integrating by parts, using the Euler-Lagrange equations together with $\frac{d}{dt}\Psi_{\delta,\varepsilon}^{t}|_{t=0}=(U_{\delta,\varepsilon}-U_{0})\partial_{u}$
on $\mathcal{A}_{\delta_{0},\varepsilon_{0}}$, and noting that the relevant boundary integral over $\partial\Omega_{\delta_{0},\varepsilon_{0}}\cap\partial\mathcal{A}_{\delta_{0},\varepsilon_{0}}$ is equivalent to integrating over $\partial B_{\delta_{0}}\cup\partial\mathcal{C}_{\delta_{0},\varepsilon_{0}}$ yields
\begin{equation}\label{111}
I_{3}=O(t)-\int_{\partial B_{\delta_{0}}}24(U_{\delta,\varepsilon}-U_{0})\partial_{\nu}U_{0}
-\int_{\partial\mathcal{C}_{\delta_{0},\varepsilon_{0}}}
24(U_{\delta,\varepsilon}-U_{0})\partial_{\nu}U_{0}
\end{equation}
for small $t$, where $\nu$ is the unit outer normal pointing towards the designated asymptotically flat end.  Next, using that
$U^{t}_{\delta,\varepsilon}=U_{0}+t(U_{\delta,\varepsilon}-U_{0})$
and $\frac{d}{dt}V^{t}_{\delta,\varepsilon}=
\frac{d}{dt}W^{t}_{\delta,\varepsilon}
=\frac{d}{dt}\zeta^{1,t}_{\delta,\varepsilon}=\frac{d}{dt}\zeta^{2,t}_{\delta,\varepsilon}=0$ on $\mathcal{A}_{\delta_{0},\varepsilon_{0}}$ produces
\begin{align}\label{112}
\begin{split}
I_{4}=&O(t)+\int_{\mathcal{A}_{\delta_{0},\varepsilon_{0}}}
24\nabla U_{0}\cdot\nabla(U_{\delta,\varepsilon}-U_{0})
-6(U_{\delta,\varepsilon}-U_{0})
\frac{e^{-6h_{1}-6U_{\delta,\varepsilon}^{t}-h_{2}-V_{0}}}{\cosh W_{0}}|\nabla\zeta^{1}_{0}|^{2}\\
&-\int_{\mathcal{A}_{\delta_{0},\varepsilon_{0}}}
6(U_{\delta,\varepsilon}-U_{0})e^{-6h_{1}-6U_{\delta,\varepsilon}^{t}+h_{2}+V_{0}}\cosh W_{0}
|e^{-h_{2}-V_{0}}\tanh W_{0}\nabla\zeta^{1}_{0}-\nabla\zeta^{2}_{0}|^{2}.
\end{split}
\end{align}
Since $U_{\delta,\varepsilon}^{t}=U_{0}+O(t)$, we are motivated to integrate by parts in \eqref{112} and use the primary Euler-Lagrange equation for $U_{0}$ to obtain only boundary terms, which should then cancel with those in $I_{3}$ as $\mathcal{A}_{\delta_{0},\varepsilon_{0}}=B_{\delta_{0}}\cup\mathcal{C}_{\delta_{0},\varepsilon_{0}}$.
In order to carry this out, it is sufficient to check that
\begin{equation}
\left|\int_{\partial B_{\delta_{0}}}\underbrace{(U_{\delta,\varepsilon}-U)}_{O(|\log\delta_{0}|)}
\underbrace{\partial_{\nu}U_{0}}_{O(\delta^{-2}_{0})}\right|
\leq C|\log\delta_{0}|\delta_{0}^{2}\rightarrow 0 \text{ }\text{ }\text{ }\text{ as }\text{ }\text{ }\text{ }\delta_{0}\rightarrow 0,
\end{equation}
and
\begin{equation}
\left|\int_{\partial \mathcal{C}_{\delta_{0},\varepsilon_{0}}}\underbrace{(U_{\delta,\varepsilon}-U)}
_{O(1)}\underbrace{\partial_{\nu}U_{0}}_{O(1)}\right|
\leq C\varepsilon_{0}\rightarrow 0 \text{ }\text{ }\text{ }\text{ as }\text{ }\text{ }\text{ }\varepsilon_{0}\rightarrow 0.
\end{equation}
It follows that $I_{3}+I_{4}=0$ when $t=0$, and hence \eqref{56} holds for $\Psi_{\delta,\varepsilon}$.

Now integrating \eqref{55} twice and applying a Sobolev inequality produces
\begin{align}\label{117}
\begin{split}
\mathcal{I}(\Psi_{\delta,\varepsilon})-\mathcal{I}(\Psi_{0})
&\geq 2\int_{\mathbb{R}^{3}}|\nabla\operatorname{dist}_{SL(3,\mathbb{R})/SO(3)}
(\Psi_{\delta,\varepsilon},\Psi_{0})|^{2}dx\\
&\geq C\left(\int_{\mathbb{R}^{3}}
\operatorname{dist}_{SL(3,\mathbb{R})/SO(3)}^{6}(\Psi_{\delta,\varepsilon},\Psi_{0})dx
\right)^{\frac{1}{3}}.
\end{split}
\end{align}
By Proposition \ref{proposition} $\lim_{\delta\rightarrow 0}\lim_{\varepsilon\rightarrow 0}
\mathcal{I}(\Psi_{\delta,\varepsilon})=\mathcal{I}(\Psi)$, and thus in order to complete the
proof it suffices to show that the limits may be passed under the integral on the right-hand side. By the triangle
inequality, this will follow if
\begin{equation}\label{118}
\lim_{\delta\rightarrow 0}\lim_{\varepsilon\rightarrow 0}\int_{\mathbb{R}^{3}}
\operatorname{dist}_{SL(3,\mathbb{R})/SO(3)}^{6}(\Psi_{\delta,\varepsilon},\Psi)dx
=0.
\end{equation}

In order to establish \eqref{118}, observe that the
triangle inequality, together with the fact that the distance between two points in $SL(3,\mathbb{R})/SO(3)$ is not greater than the length of a coordinate line connecting them, produces
\begin{align}\label{119}
\begin{split}
&\operatorname{dist}_{SL(3,\mathbb{R})/SO(3)}(\Psi_{\delta,\varepsilon},\Psi)\\
\leq&\operatorname{dist}_{SL(3,\mathbb{R})/SO(3)}
((U_{\delta,\varepsilon},V_{\delta,\varepsilon},W_{\delta,\varepsilon},
\zeta^{1}_{\delta,\varepsilon},\zeta^{2}_{\delta,\varepsilon}),
(U,V_{\delta,\varepsilon},W_{\delta,\varepsilon},
\zeta^{1}_{\delta,\varepsilon},\zeta^{2}_{\delta,\varepsilon}))\\
&+\operatorname{dist}_{SL(3,\mathbb{R})/SO(3)}
((U,V_{\delta,\varepsilon},W_{\delta,\varepsilon},
\zeta^{1}_{\delta,\varepsilon},\zeta^{2}_{\delta,\varepsilon}),
(U,V,W_{\delta,\varepsilon},
\zeta^{1}_{\delta,\varepsilon},\zeta^{2}_{\delta,\varepsilon}))\\
&+\operatorname{dist}_{SL(3,\mathbb{R})/SO(3)}
((U,V,W_{\delta,\varepsilon},
\zeta^{1}_{\delta,\varepsilon},\zeta^{2}_{\delta,\varepsilon}),
(U,V,W,
\zeta^{1}_{\delta,\varepsilon},\zeta^{2}_{\delta,\varepsilon}))\\
&+\operatorname{dist}_{SL(3,\mathbb{R})/SO(3)}
((U,V,W,
\zeta^{1}_{\delta,\varepsilon},\zeta^{2}_{\delta,\varepsilon}),
(U,V,W,
\zeta^{1},\zeta^{2}_{\delta,\varepsilon}))\\
&+\operatorname{dist}_{SL(3,\mathbb{R})/SO(3)}
((U,V,W,
\zeta^{1},\zeta^{2}_{\delta,\varepsilon}),
(U,V,W,
\zeta^{1},\zeta^{2}))\\
\leq& C\left(|U-U_{\delta,\varepsilon}|+|V-V_{\delta,\varepsilon}|
+|W-W_{\delta,\varepsilon}|\right)\\
&+Ce^{-3U-3h_{1}}\left(e^{-\tfrac{1}{2}V-\tfrac{1}{2}h_{2}}
|\zeta^{1}-\zeta^{1}_{\delta,\varepsilon}|
+e^{\tfrac{1}{2}V+\tfrac{1}{2}h_{2}}
|\zeta^{2}-\zeta^{2}_{\delta,\varepsilon}|\right).
\end{split}
\end{align}
Notice that
\begin{equation}\label{120}
\int_{\mathbb{R}^{3}}|U-U_{\delta,\varepsilon}|^{6}dx\leq
\int_{\mathbb{R}^{3}\setminus B_{1/\delta}}\underbrace{|U-U_{0}|^{6}}_{O(r^{-6+6\kappa})}dx=O(\delta^{6\kappa})\rightarrow 0
\end{equation}
as $\delta\rightarrow 0$. Next we have
\begin{equation}\label{121}
\int_{\mathbb{R}^{3}}
|V-V_{\delta,\varepsilon}|^{6}dx
\leq C\left(\int_{\mathbb{R}^{3}\setminus B_{1/\delta}}\underbrace{|V-V_{0}|^{6}}_{O(r^{-6-6\kappa})}
+\int_{\mathcal{C}_{\delta,\sqrt{\varepsilon}}}\underbrace{|V-V_{0}|^{6}}_{O(1)\text{ as }\varepsilon\rightarrow 0}
+\int_{B_{2\delta}}
\underbrace{|V-V_{0}|^{6}}_{O(1)}\right),
\end{equation}
which converges to zero if $\varepsilon\rightarrow 0$ before $\delta\rightarrow 0$. Similarly
\begin{equation}
\int_{\mathbb{R}^{3}}
|W-W_{\delta,\varepsilon}|^{6}dx
\leq C\left(\int_{\mathbb{R}^{3}\setminus B_{1/\delta}}\underbrace{|W-W_{0}|^{6}}_{\rho^{3}O(r^{-12-6\kappa})}
+\int_{\mathcal{C}_{\delta,\sqrt{\varepsilon}}}\underbrace{|W-W_{0}|^{6}}_{O(\rho^{3})\text{ as }\varepsilon\rightarrow 0}
+\int_{B_{2\delta}}
\underbrace{|W-W_{0}|^{6}}_{\rho^{3}O(r^{-6})}\right).
\end{equation}

The last two terms on the right-hand side of \eqref{119} may each be treated in a similar fashion. Let us consider the first of these. Using the formulas \eqref{harmonicfunction} and \eqref{harmonicfunction1} yields
\begin{equation}
\int_{\mathbb{R}^{3}}e^{-18U-18h_{1}-3V-3h_{2}}|\zeta^{1}-\zeta^{1}_{\delta,\varepsilon}|^{6}dx
\leq \int_{\mathbb{R}^{3}\setminus B_{1/\delta}}
+\int_{\mathcal{C}_{\delta,\sqrt{\varepsilon}}}
+\int_{B_{2\delta}}e^{-18U-3V}\frac{\cos^{3}\theta}{\rho^{9}\sin^{3}\theta}
|\zeta^{1}-\zeta^{1}_{\delta,\varepsilon}|^{6}.
\end{equation}
Furthermore
\begin{equation}
\int_{\mathbb{R}^{3}\setminus B_{1/\delta}}
\underbrace{e^{-18U-3V}}_{O(1)}\frac{\cos^{3}\theta}{\rho^{9}\sin^{3}\theta}
\underbrace{|\zeta^{1}-\zeta^{1}_{\delta,\varepsilon}|^{6}}_{\rho^{12}O(r^{-12-6\kappa})\sin^{3}\theta }=O(\delta^{6\kappa})\rightarrow 0\text{ }\text{ }\text{ }\text{ as }\text{ }\text{ }\text{ }\delta\rightarrow 0,
\end{equation}
\begin{equation}
\int_{\mathcal{C}_{\delta,\sqrt{\varepsilon}}}
\underbrace{e^{-18U-3V}}_{O(1)}\frac{\cos^{3}\theta}{\rho^{9}\sin^{3}\theta}
\underbrace{|\zeta^{1}-\zeta^{1}_{\delta,\varepsilon}|^{6}}_{\rho^{12}\sin^{3}\theta} =O(\varepsilon^{\frac{5}{2}})\rightarrow 0\text{ }\text{ }\text{ }\text{ as }\text{ }\text{ }\text{ }\varepsilon\rightarrow 0,
\end{equation}
and in the asymptotically flat and asymptotically cylindrical cases respectively
\begin{equation}
\int_{B_{2\delta}}
\underbrace{e^{-18U-3V}}_{O(r^{36})}\frac{\cos^{3}\theta}{\rho^{9}\sin^{3}\theta}
\underbrace{|\zeta^{1}-\zeta^{1}_{\delta,\varepsilon}|^{6}}_{\rho^{12}O(r^{-48+6\kappa})\sin^{3}\theta }=O(\delta^{6\kappa})\rightarrow 0\text{ }\text{ }\text{ }\text{ as }\text{ }\text{ }\text{ }\delta\rightarrow 0,
\end{equation}
\begin{equation}
\int_{B_{2\delta}}
\underbrace{e^{-18U-3V}}_{O(r^{18})}\frac{\cos^{3}\theta}{\rho^{9}\sin^{3}\theta}
\underbrace{|\zeta^{1}-\zeta^{1}_{\delta,\varepsilon}|^{6}}_{\rho^{12}O(r^{-30+6\kappa})\sin^{3}\theta }=O(\delta^{6\kappa})\rightarrow 0\text{ }\text{ }\text{ }\text{ as }\text{ }\text{ }\text{ }\delta\rightarrow 0.
\end{equation}
It follows that \eqref{118} holds. \hfill\qedsymbol\medskip

\textit{Proof of Theorem \ref{TheoremI}.} By replacing $\eta_{(l)}$ with $-\eta_{(l)}$ if necessary, we may assume without loss of generality that $\mathcal{J}_{l}\geq 0$, $l=1,2$, so that $\mathcal{J}_{l}=|\mathcal{J}_{l}|$. If both $\mathcal{J}_{1}=\mathcal{J}_{2}=0$, then inequality \eqref{23} reduces to the positive mass theorem which holds under the current assumptions on the initial data. If only one angular momentum vanishes, say $\mathcal{J}_{1}=0$ and $\mathcal{J}_{2}\neq 0$, then we may perturb the initial data slightly to achieve $\mathcal{J}_{1}\neq 0$ and $\mathcal{J}_{2}\neq 0$ while preserving all other hypotheses of the theorem. The arguments below show that inequality \eqref{23} holds for the perturbed data, and hence also for the unperturbed data by letting the perturbation go to zero.

It remains to consider the case when $\mathcal{J}_{l}>0$, $l=1,2$. In this case there is an extreme Myers-Perry black hole solution that can serve as the model spacetime, giving rise to the harmonic map $\tilde{\Psi}_{0}$ used in the convexity arguments.
The asymptotic assumptions on the initial data $(M,g,k)$ imply that $(U,V,W,\zeta^{1},\zeta^{2})$ satisfy the asymptotics
\eqref{fall1}-\eqref{fall4.1}, see Appendix A.
Thus Theorem \ref{infimum} applies,
and the inequality \eqref{23} of Theorem \ref{TheoremI} follows from \eqref{lkjh} and \eqref{53}, after noting that
\begin{equation}\label{128}
\mathcal{M}(\Psi_{0})=\left(\frac{27\pi}{32}(\mathcal{J}_{1}+\mathcal{J}_{2})^{2}\right)^{\frac{1}{3}}.
\end{equation}

Consider now the case of equality in \eqref{23} when $\mathcal{J}_{l}>0$, $l=1,2$. As alluded to above, only in this case of nonvanishing angular momenta do we have a proper black hole spacetime arising from the extreme Myers-Perry family. If only one of the angular momenta vanish, the corresponding extreme Myers-Perry solution has a naked singularity, and such data do not satisfy the asymptotic hypotheses of the theorem. If both angular momenta vanish, then the corresponding extreme Myers-Perry data is isometric to Euclidean space minus a point, and such data again do not satisfy the hypotheses; in fact, more generally, this is true for extreme Myers-Perry data with $\mathcal{J}_{1}+\mathcal{J}_{2}=0$. Continuing with the proof in the case of nonvanishing angular momentum, observe that equality in \eqref{23} together with \eqref{lkjh} and \eqref{53} implies that
\begin{equation}\label{43}
\mu=0,\text{ }\text{ }\text{ }\text{ }\text{
}\text{ }
A^{i}_{\rho,z}=A^{i}_{z,\rho},\text{ }\text{ }\text{ }\text{ }\text{ }i=1,2,
\end{equation}
\begin{equation}\label{kequation}
k(e_{i},e_{j})=k(e_{3},e_{3})=k(e_{3},e_{4})=k(e_{4},e_{4})=0,\text{ }\text{ }\text{ }\text{ }\text{ }i,j\neq 3, 4,
\end{equation}
and
\begin{equation}\label{44}
\mathcal{M}(U,V,W,\zeta^{1},\zeta^{2})=\mathcal{M}(U_0,V_0,W_0,\zeta^{1}_0,\zeta^{2}_0).
\end{equation}
Furthermore, according to the gap bound \eqref{53}, a map which minimizes the functional $\mathcal{M}$ must coincide with
the harmonic map associated with the extreme Myers-Perry spacetime, that is
\begin{equation}\label{45}
(U,V,W,\zeta^{1},\zeta^{2})=(U_0,V_0,W_0,\zeta^{1}_0,\zeta^{2}_0).
\end{equation}
Next notice that \eqref{asdf}, \eqref{kequation}, and \eqref{45} yield
\begin{align}\label{909}
\begin{split}
R=&16\pi\mu+|k|^{2}\\
=&16\pi\mu
+\frac{e^{-8U-2\alpha+2\log r}}{2\rho^{2}}\nabla\zeta^{t}\lambda^{-1}\nabla\zeta\\
=&\frac{e^{-8U_{0}-2\alpha+2\log r}}{2\rho^{2}}\nabla\zeta_{0}^{t}\lambda^{-1}_{0}\nabla\zeta_{0}\\
=&e^{2(\alpha_{0}-\alpha)}R_{0},
\end{split}
\end{align}
where $\alpha_{0}$ and $R_{0}$ are corresponding quantities for the extreme Myers-Perry solution. On the other hand, using the scalar curvature formula \eqref{SCALAR}, together with \eqref{43} and \eqref{45} implies that
\begin{align}
\begin{split}
e^{2U+2\alpha-2\log r}R
=&-6\Delta U_{0}-2\Delta_{\rho,z}\alpha-6|\nabla U_{0}|^2
+\frac{\det\nabla\lambda_{0}}{2\rho^{2}}\\
=&e^{2U_{0}+2\alpha_{0}-2\log r}R_{0}+2\Delta_{\rho,z}(\alpha_{0}-\alpha).
\end{split}
\end{align}
It then follows from \eqref{45} and \eqref{909} that $\Delta_{\rho,z}(\alpha_{0}-\alpha)=0$. In light of the condition \eqref{cone} on the axis to avoid conical singularities, we have $(\alpha_{0}-\alpha)|_{\Gamma}=0$. Moreover $(\alpha_{0}-\alpha)\rightarrow 0$ as $r\rightarrow\infty$. Hence the maximum principle shows that $\alpha=\alpha_{0}$.

We are now in a position to show that $(M,g)$ is isometric to the canonical slice of the extreme Myers-Perry black hole.
By \eqref{43} the 1-forms $A_{\rho}^{i}d\rho+A_{z}^{i}dz$, $i=1,2$ are closed, and so there exist potentials such that
$\partial_{\rho}f^{i}=A_{\rho}^{i}$ and $\partial_{z}f^{i}=A_{z}^{i}$, $i=1,2$. Then under the change of coordinates
$\widetilde{\phi}^{i}=\phi^{i}+f^{i}(\rho,z)$, the metric takes the form
\begin{equation}\label{49}
g=\frac{e^{2U_0+2\alpha_{0}}}{2\sqrt{\rho^{2}+z^{2}}}(d\rho^{2}+dz^{2})
+e^{2U_0}(\lambda_{0})_{ij}d\widetilde{\phi}^{i}d\widetilde{\phi}^{j},
\end{equation}
which yields the desired result $g\cong g_0$. Lastly \eqref{equation}, \eqref{kequation}, \eqref{45}, and $\alpha=\alpha_{0}$ show that the tensor $k$ coincides with the extrinsic curvature of the canonical extreme Myers-Perry slice. Note that this also shows that the linear momentum vanishes $J=0$.
\hfill\qedsymbol

\section{Discussion}

In this paper we have established the mass-angular momentum inequality for 4-dimensional initial data having horizons of spherical topology, and which admit a Brill coordinate representation. There are many directions for possible generalizations. First, we strongly suspect that as in the 3-dimensional case \cite{chrusciel2008masspositivity}, the existence of Brill coordinates always occurs for data with simple topology and appropriate asymptotics. Therefore Theorem \ref{TheoremI} should hold without the Brill coordinate hypothesis. Second, it is natural to consider such inequalities for data with multiple horizons. In the 3-dimensional setting such inequalities were obtain \cite{chrusciel2008mass,khuri2015positive} in both the charged and uncharged cases, however the mass lower bound was not given explicitly. In order to carry this out in the higher dimensional setting, one would first need to construct a harmonic map to serve in the place of the Myers-Perry harmonic map $\tilde{\Psi}_{0}$. Such an existence result for a harmonic map with `multiple horizons' should be possible through an application of Weinstein's theory \cite{Weinstein}. However, the convexity arguments would be much more difficult to carry out, as such harmonic maps are not given explicitly.

Perhaps the most challenging and interesting generalization would be to allow horizons with nontrivial topology. In this situation the orbit space structure would change. In general, the 2-dimensional orbit space $M^{4}/U(1)^2$ is a
simply connected manifold with boundaries and corners \cite{Alaeeremarks2015,Hollands2008}. The boundary $\Gamma=\{\rho =0 \}$ is divided into rod intervals ${I}_s = \{\rho=0, a_{s} \leq z \leq a_{s+1}\}$, $1\leq s\leq \bar{s}+1$ where $a_1 < a_2 < \cdots < a_{\bar{s}+1}$, and on each such rod segment $\lambda$ has rank 1 or 2.  In particular, on each ${I}_s$ a certain integral linear combination of the $\eta_{(l)}$, $l=1,2$ vanishes, that is, there exists a vector $n_s^l \eta_{(l)}$, with $n_{s}^{l}\in\mathbb{Z}$, which lies in the kernel of $\lambda$, namely $\lambda_{ij} n_s^j = 0$. One may then give each rod a two component label $(n_{s}^{1},n_{s}^{2})$, indicating which linear combination vanishes. Horizons carry the label $(0,0)$, and all other rods have the property that $\lambda$ is of rank 1 while at corner points $\lambda$ is of rank 0. Moreover, asymptotic flatness implies the existence of two semi-infinite intervals ${I}_1 = \{-\infty < z < a_2\}$ and ${I}_{\bar{s}} = \{a_{\bar{s}} < z < \infty\}$ with the labels $(0,1)$ and $(1,0)$ respectively (after perhaps choosing an appropriate coordinate basis). The collection of rod intervals $I_{s}$ together with the associated labels is referred to as the orbit space data. As we have seen in this paper, the orbit space data for the extreme Myers-Perry solution consists only of the two semi-infinite rods, and the same is true for a non-extreme Myers-Perry in Brill coordinates while in Weyl coordinates it has an extra rod with label $(0,0)$ in between that represents the horizon. Consider now the black ring solution. The extreme version has three rods $I_{s}$, $s=1,2,3$ with corresponding labels $(0,1)$, $(1,0)$, and $(1,0)$. The point between $I_{2}$ and $I_{3}$ represents a cylindrical end with cross-section having topology $S^{1}\times S^{2}$. The non-extreme black ring has an extra rod between $I_{2}$ and $I_{3}$ with label $(0,0)$ to encode the horizon. When trying to establish the mass-angular momentum inequality for black holes with $S^{1}\times S^{2}$ topology, the main difficulty occurs from the fact that the orbit space structure for the model (extreme black ring) is not compatible with the orbit space structure for manifolds with two asymptotically flat ends. Thus it is not clear if a Brill coordinate description is possible, on which arbitrary initial data may be compared with the model. In particular, it is not even clear if there is a single Brill coordinate description which is compatible with both the extreme and non-extreme black ring data. On the other hand, some positive results have been obtained in the direction of a mass-angular momentum inequality for nontrivial topologies. Namely, a slight variation of the the mass functional \eqref{MASSFUN} may be derived for very general orbit space data, and it is known to be nonnegative for special classes of rod structures which include that of the extreme black ring \cite{alaee2014mass}. Ultimately, however, for nontrivial topologies it may be more appropriate to use Weyl coordinates in which the horizon is represented as a rod, instead of Brill coordinates in which the horizon is represented as a point.

\appendix
\numberwithin{equation}{section}

\section{Asymptotics}

Here we compute the asymptotics of the harmonic map data $(U,V,W,\zeta^1,\zeta^2)$ which are implied by the asymptotics of the generalized Brill data in \eqref{F1}-\eqref{F12}, and observe that they are stronger than those \eqref{fall1}-\eqref{fall4.1} which are needed to carry out the convexity arguments of Section \ref{sec4}. The asymptotics of $U$ are given directly, and those of $V$ and $W$ may be derived from the equations \eqref{VW}. Thus, it remains to compute the asymptotics for the potentials $\zeta^1$ and $\zeta^2$. Observe that
\eqref{equation} yields
\begin{equation}
|\nabla \zeta^{i}|=\left(|\partial_{\rho}\zeta^i|^{2}+|\partial_{z}\zeta^i|^{2}\right)^{\frac{1}{2}}
\leq Cr^{-1}\rho e^{4U+\alpha}\left(|k(e_{1},e_{i+2})|+|k(e_{2},e_{i+2})|\right).
\end{equation}
Furthermore, asymptotics for $k(e_{l},e_{i+2})$, $l=1,2$ may be obtained from the asymptotics of $|k|_{g}$ and $\lambda$ through the inequality
\begin{equation}
\sum_{l=1,2}\lambda^{ij}k(e_{l},e_{i+2})k(e_{l},e_{j+2})\leq |k|_{g}^{2}.
\end{equation}
In conclusion, Brill asymptotics imply the following asymptotics for the harmonic map data.
In the designated asymptotically flat end as $r\rightarrow\infty$
\begin{equation}
U=O(r^{-1-\kappa}),\text{ }\text{ }\text{ }\text{ }V=O(r^{-1-\kappa}),\text{ }\text{ }\text{ }\text{ }W=\rho O(r^{-5-\kappa}),
\end{equation}
\begin{equation}
|\nabla U|=O(r^{-3-\kappa}),\text{ }\text{ }\text{ }\text{ }|\nabla V|=O(r^{-3-\kappa}),\text{ }\text{ }\text{ }\text{ }|\nabla W|= O(r^{-5-\kappa}),
\end{equation}
\begin{equation}
|\nabla \zeta^{1}|=\rho\sin\theta O(r^{-2-\kappa}),\text{ }\text{ }\text{ }\text{ }\text{ }\text{ }\text{ }\text{ }|\nabla \zeta^{2}|=\rho\cos\theta O(r^{-2-\kappa}).
\end{equation}
As $r\rightarrow 0$ in the asymptotically flat case
\begin{equation}
U=-2\log r+O(1),\text{ }\text{ }\text{ }\text{ }V=O(r^{1+\kappa}),\text{ }\text{ }\text{ }\text{ } W=\rho O(r^{-1+\kappa}),
\end{equation}
\begin{equation}
|\nabla U|=O(r^{-2}),\text{ }\text{ }\text{ }\text{ }|\nabla V|=O(r^{-1+\kappa}),\text{ }\text{ }\text{ }\text{ }|\nabla W|=O(r^{-1+\kappa}),
\end{equation}
\begin{equation}
|\nabla \zeta^{1}|=\rho\sin\theta O(r^{-6+\kappa}),\text{ }\text{ }\text{ }\text{ }\text{ }\text{ }\text{ }\text{ }|\nabla \zeta^{2}|=\rho\cos\theta O(r^{-6+\kappa}),
\end{equation}
and in the asymptotically cylindrical case
\begin{equation}
U=-\log r+O(1),\text{ }\text{ }\text{ }\text{ }V=O(r^{1+\kappa}),\text{ }\text{ }\text{ }\text{ } W=\rho O(r^{-2}),
\end{equation}
\begin{equation}
|\nabla U|=O(r^{-2}),\text{ }\text{ }\text{ }\text{ }|\nabla V|=O(r^{-1+\kappa}),\text{ }\text{ }\text{ }\text{ }|\nabla W|= O(r^{-2}),
\end{equation}
\begin{equation}
|\nabla \zeta^{1}|=\rho\sin\theta O(r^{-2+\kappa}),\text{ }\text{ }\text{ }\text{ }\text{ }\text{ }\text{ }\text{ }|\nabla \zeta^{2}|=\rho\cos\theta O(r^{-2+\kappa}).
\end{equation}
Lastly, as $\rho\rightarrow 0$ with $\delta\leq r\leq 2/\delta$ we have
\begin{equation}
U=O(1),\text{ }\text{ }\text{ }\text{ }V=O(1),\text{ }\text{ }\text{ }\text{ }W=O(\rho^{\frac{1}{2}}),
\end{equation}
\begin{equation}
|\nabla U|=O(1),\text{ }\text{ }\text{ }\text{ }|\nabla V|=O(1),\text{ }\text{ }\text{ }\text{ }|\nabla W|=O(\rho^{-\frac{1}{2}}),
\end{equation}
\begin{equation}
|\nabla \zeta^{1}|=\sin\theta O(\rho),\text{ }\text{ }\text{ }\text{ }\text{ }\text{ }\text{ }\text{ }|\nabla \zeta^{2}|=\cos\theta O(\rho).
\end{equation}

\section{The Extreme Myers-Perry Harmonic Map} \label{AppMP}

The Myers-Perry black holes \cite{Myers1986} are solutions to the vacuum Einstein equations in all dimensions greater than four, and have horizons of spherical topology. They are considered to be the natural generalization to higher dimensions of the 4-dimensional Kerr black holes. In coordinates analogous to those of Boyer-Lindquist used for the Kerr solution, the Myers-Perry metric takes the form
\begin{align}
\begin{split}
&-dt^2+\frac{\mathfrak{m}}{\Sigma}\left(dt+a\sin^2\theta d\phi^1+b\cos^2\theta d\phi^2\right)^2+\frac{\tilde{r}^2\Sigma}{\Delta}d\tilde{r}^2\\
&+ \Sigma d\theta^2+\left(\tilde{r}^2+a^2\right)\sin^2\theta (d\phi^1)^2
+\left(\tilde{r}^2+b^2\right)\cos^2\theta (d\phi^2)^2,
\end{split}
\end{align}
where
\begin{gather}
\Sigma=\tilde{r}^2+b^2\sin^2\theta+a^2\cos^2\theta,\qquad \Delta=\left(\tilde{r}^2+a^2\right)\left(\tilde{r}^2+b^2\right)-\mathfrak{m} \tilde{r}^2.
\end{gather}
This family of solutions is parameterized by $(\mathfrak{m},a,b)$ which give rise to the mass and angular momenta through the formulae
\begin{gather}
m=\frac{3}{8}\pi\mathfrak{m},\qquad \mathcal{J}_{1}=\frac{2}{3} ma,\qquad \mathcal{J}_{2}=\frac{2}{3} mb;
\end{gather}
the black hole is referred to as extreme if $\mathfrak{m}=(a+b)^2$.
Note that this spacetime has the orthogonally transitive isometry group $\mathbb{R}\times U(1)^2$, where $\mathbb{R}$ gives the time translation symmetry and $U(1)^2$ is the rotational symmetry generated by $\partial_{\phi^1}$ and $\partial_{\phi^2}$. Here $(\tilde{r},\theta)$ parameterize the 2-dimensional surfaces orthogonal to the orbits of the isometry group. The horizons of this black hole are located at the roots of $\Delta$, namely
\begin{equation}
\tilde{r}_{\pm}=\pm\sqrt{\frac{\mathfrak{m}-a^2-b^2
+\sqrt{\left(\mathfrak{m}-a^2-b^2\right)^2-4a^2b^2}}{2}},
\end{equation}
and the singularities of this metric for nonvanishing $a$ and $b$ with $|a|\neq |b|$ are located at the roots of $\Sigma$.
We will restrict attention to the exterior region
$\tilde{r}>\tilde{r}_{+}$, with the other variables having ranges $0 < \theta < \pi/2$ and $0<\phi^1,\phi^2<2\pi$.

Consider now the metric on a constant time slice. In the exterior region, this may be put into Brill form by defining a new radial coordinate $r$:
\begin{equation}\label{appendix1}
\tilde{r}^2=r^2+\frac{1}{2}\left(\mathfrak{m}-a^2-b^2\right)
+\frac{\mathfrak{m}\left(\mathfrak{m}-2a^2-2b^2\right)+(a^2-b^2)^2}{16r^2},\text{ }\text{ }\text{ }\text{ }\text{ }\mathfrak{m}\neq (a+b)^2,
\end{equation}
\begin{equation}
\tilde{r}^2=r^2+ab,\text{ }\text{ }\text{ }\text{ }\text{ }\mathfrak{m}= (a+b)^2.
\end{equation}
Observe that the new coordinate is defined on the interval $(0,\infty)$, and a critical point for the right-hand side of \eqref{appendix1} occurs at the horizon, so that two isometric copies of the outer region are encoded on this
interval. The coordinates $(r,\theta,\phi^1,\phi^2)$ then give a (polar) Brill coordinate system, where the spatial metric takes the form
\begin{eqnarray}
g=\frac{\Sigma}{r^2}dr^2
+\Sigma d\theta^2+\Lambda_{ij}d\phi^i d\phi^j,
\end{eqnarray}
with
\begin{equation}
\Lambda_{11}=\frac{a^2\mathfrak{m}}{\Sigma}\sin^4\theta+(\tilde{r}^2+a^2)\sin^2\theta,\quad \Lambda_{12}=\frac{ab\mathfrak{m}}{\Sigma}\sin^2\theta\cos^2\theta,\quad
\Lambda_{22}=\frac{b^2\mathfrak{m}}{\Sigma}\cos^4\theta+(\tilde{r}^2+b^2)\cos^2\theta.
\end{equation}
Cylindrical Brill coordinates may be obtained via the usual transformation $\rho=\frac{1}{2}r^{2}\sin(2\theta)$, $z=\frac{1}{2}r^{2}\cos(2\theta)$, so that the metric is given by
\begin{equation}
g=\frac{e^{2U+2\alpha}}{2\sqrt{\rho^2+z^2}}(d\rho^2+dz^2)
+e^{2U}\lambda_{ij}d\phi^i d\phi^j,
\end{equation}
where
\begin{equation}\label{htildeMP}
e^{2U}=\frac{\sqrt{\det\Lambda}}{\rho}, \qquad e^{2\alpha}=\frac{\rho \Sigma}{r^2\sqrt{\det\Lambda}}, \qquad \lambda_{ij}=\frac{\rho}{\sqrt{\det\Lambda}}\Lambda_{ij}.
\end{equation}
From this we may compute the harmonic map data $(U,V,W)$ with the help of \eqref{VW}. Moreover, the twist potentials are given in the non-extreme case by
\begin{align}
\begin{split}
\zeta^1=&\frac{\left[\mathcal{C}_{1}^{2}+256r^{4}\Sigma(a^2-b^2)
\cos^2\theta\right]\left(\mathcal{C}_{1}-16r^2\left(a^2-b^2\right)\right)\mathfrak{m} a}{16^3r^6\Sigma (a^2-b^2)^2}-\frac{\mathcal{C}_{1}^{2}-32r^4\mathcal{C}_2}{256r^4(a^2-b^2)^2}\mathfrak{m} a,\\
\zeta^2=&-\frac{\left[\mathcal{C}_{1}
\left(\mathcal{C}_{1}-32r^2\left(a^2-b^2\right)\right)+256r^4(a^2-b^2)
(\Sigma\cos^2\theta+(a^2-b^2))\right]\mathcal{C}_{1}\mathfrak{m} b}{4096r^6\Sigma (a^2-b^2)^2}\\
&+\frac{\mathcal{C}_{1}^{2}-16r^2\mathcal{C}_{1}\left(a^2-b^2\right)
+32r^4\mathcal{C}_{3}}{256r^4(a^2-b^2)^2}\mathfrak{m} b,
\end{split}
\end{align}
where
\begin{equation}
\mathcal{C}_{1}=16r^4+8(\mathfrak{m}+a^2-b^2)r^2
+\left(\mathfrak{m}-\left(a-b\right)^2\right)\left(\mathfrak{m}-\left(a+b\right)^2\right),
\end{equation}
\begin{equation}
\mathcal{C}_2=3(a^2-b^2)^2+\mathfrak{m}\left(3\mathfrak{m}-6b^2+2a^2\right),\text{ }\text{ }\text{ }\text{ }\text{ } \mathcal{C}_{3}=(a^2-b^2)^2+\mathfrak{m}\left(2a^2+2b^2-3\mathfrak{m}\right),
\end{equation}
and in the extreme case by
\begin{align}
\begin{split}
\zeta^1_0=&\frac{a(a^2-b^2)(r^2+ab+b^2)\cos^2\theta
-r^2a(2a^2+2ab+r^2)}{(a-b)^2}+\frac{a(r^2+ab+a^2)^2(r^2+ab+b^2)}{\Sigma(a-b)^2},\\ \zeta^2_0=&\frac{br^2((a+b)^2+r^2)-b(a^2-b^2)(r^2+ab+a^2)\cos^2\theta}{(a-b)^2}
-\frac{b(r^2+ab+a^2)(r^2+ab+b^2)^2}{\Sigma(a-b)^2}.
\end{split}
\end{align}
The asymptotics of the non-extreme Myers-Perry data are then as follows. In the designated asymptotically flat end as $r\rightarrow\infty$ we have
\begin{equation}
U=O(r^{-2}),\text{ }\text{ }\text{ }\text{ }V=O(r^{-2}),\text{ }\text{ }\text{ }\text{ }W=\rho O(r^{-6}),
\end{equation}
\begin{equation}
|\nabla U|=O(r^{-4}),\text{ }\text{ }\text{ }\text{ }|\nabla V|=O(r^{-4}),\text{ }\text{ }\text{ }\text{ }|\nabla W|=O(r^{-6}),
\end{equation}
\begin{equation}
|\nabla \zeta^{1}|=\rho\sin^{2}\theta O(r^{-4}),\text{ }\text{ }\text{ }\text{ }\text{ }\text{ }\text{ }\text{ }|\nabla \zeta^{2}|=\rho\cos^{2}\theta O(r^{-4}).
\end{equation}
In the nondesignated (asymptotically flat) end as $r\rightarrow 0$ it holds that
\begin{equation}
U=-2\log r+O(1),\text{ }\text{ }\text{ }\text{ }V=O(r^2),\text{ }\text{ }\text{ }\text{ }W=\rho O(r^{2}),
\end{equation}
\begin{equation}
|\nabla U|=O(r^{-2}),\text{ }\text{ }\text{ }\text{ }|\nabla V|=O(1),\text{ }\text{ }\text{ }\text{ }|\nabla W|=O(r^{2}),
\end{equation}
\begin{equation}
|\nabla \zeta^{1}|=\rho\sin^{2}\theta O(r^{-4}),\text{ }\text{ }\text{ }\text{ }\text{ }\text{ }\text{ }\text{ }|\nabla \zeta^{2}|=\rho\cos^{2}\theta O(r^{-4}).
\end{equation}
Furthermore, the near axis asymptotics as $\rho\rightarrow 0$, $\delta\leq r\leq 2/\delta$ are given by
\begin{equation}
U=O(1),\text{ }\text{ }\text{ }\text{ }V=O(1),\text{ }\text{ }\text{ }\text{ } W=O(\rho),
\end{equation}
\begin{equation}
|\nabla U|=O(1),\text{ }\text{ }\text{ }\text{ }|\nabla V|=O(1),\text{ }\text{ }\text{ }\text{ }|\nabla W|=O(1),
\end{equation}
\begin{equation}
|\nabla \zeta^{1}|=\sin^{2}\theta O(\rho),\text{ }\text{ }\text{ }\text{ }\text{ }\text{ }\text{ }\text{ }|\nabla \zeta^{2}|=\cos^{2}\theta O(\rho).
\end{equation}
Asymptotics in the extreme case may be computed similarly, and are recorded in \eqref{fall4}-\eqref{fall10.1}.

Lastly we note that the extreme Myers-Perry harmonic map $\tilde{\Psi}_{0}=(u_0,v_0,w_0,\zeta^{1}_{0},\zeta^{2}_{0}):\mathbb{R}^{3}\setminus\Gamma
\rightarrow SL(3,\mathbb{R})/SO(3)$ satisfies the Euler-Lagrange equations arising from the energy \eqref{energy1}, namely
\begin{align}
\begin{split}
4\Delta u+\frac{e^{-6u-v}}{\cosh w}|\nabla \zeta^{1}|^{2}+e^{-6u+v}\cosh w
\left|e^{-v}\tanh w\nabla \zeta^{1}-\nabla \zeta^{2}\right|^{2}&=0,\\
2\operatorname{div}\left(\cosh^2 w\nabla v\right)+e^{-6u-v}\cosh w|\nabla \zeta^{1}|^{2}- e^{-6u+v}\cosh w\left|\nabla\zeta^2\right|^2&=0,\\
2\Delta w-\sinh 2 w|\nabla v|^2-e^{-6u-v}\sinh w \left|\nabla\zeta^1\right|^2&\\
+2e^{-6u}\cosh w\delta_{3}(\nabla\zeta^1,\nabla\zeta^2)-e^{-6u+v}\sinh w\left|\nabla\zeta^2\right|^2&=0,\\
\operatorname{div}\left(e^{-6u-v}\cosh w\nabla \zeta^{1}-e^{-6u}\sinh w\nabla \zeta^{2}\right)&=0,\\
\operatorname{div}\left(e^{-6u}\sinh w\nabla \zeta^{1}-e^{-6u+v}\cosh w\nabla \zeta^{2}\right)&=0.
\end{split}
\end{align}

\bibliographystyle{abbrv}
\bibliography{masterfile}
\end{document}